\theoremstyle{plain}
\newtheorem{mythe}{Theorem}
\theoremstyle{plain}
\theoremstyle{plain}
\theoremstyle{plain}
\newtheorem{mypro}{Proposition}
\theoremstyle{plain}
\newtheorem{mycor}{Corollary}
\theoremstyle{plain}
\theoremstyle{plain}
\theoremstyle{plain}
\theoremstyle{plain}
\theoremstyle{plain}
\theoremstyle{plain}
\def\UrlSpecials{\do\~{\kern -.15em\lower .7ex\hbox{~}\kern .04em}} \catcode`~=13
\newcommand{\calC}{\mathcal{C}}
\newcommand{\calD}{\mathcal{D}}
\newcommand{\calL}{\mathcal{L}}
\newcommand{\calN}{\mathcal{N}}
\newcommand{\calR}{\mathcal{R}}
\newcommand{\ba}{\mathbf{a}}
\newcommand{\bA}{\mathbf{A}}
\newcommand{\bb}{\mathbf{b}}
\newcommand{\bB}{\mathbf{B}}
\newcommand{\bC}{\mathbf{C}}
\newcommand{\bE}{\mathbf{E}}
\newcommand{\bF}{\mathbf{F}}
\newcommand{\bg}{\mathbf{g}}
\newcommand{\bG}{\mathbf{G}}
\newcommand{\bH}{\mathbf{H}}
\newcommand{\bi}{\mathbf{i}}
\newcommand{\bI}{\mathbf{I}}
\newcommand{\boldm}{\mathbf{m}}
\newcommand{\bM}{\mathbf{M}}
\newcommand{\bR}{\mathbf{R}}
\newcommand{\bS}{\mathbf{S}}
\newcommand{\bT}{\mathbf{T}}
\newcommand{\bu}{\mathbf{u}}
\newcommand{\bU}{\mathbf{U}}
\newcommand{\bv}{\mathbf{v}}
\newcommand{\bV}{\mathbf{V}}
\newcommand{\bw}{\mathbf{w}}
\newcommand{\bW}{\mathbf{W}}
\newcommand{\bx}{\mathbf{x}}
\newcommand{\bX}{\mathbf{X}}
\newcommand{\by}{\mathbf{y}}
\newcommand{\bY}{\mathbf{Y}}
\newcommand{\bz}{\mathbf{z}}
\newcommand{\bZ}{\mathbf{Z}}
\newcommand{\bbC}{\mathbb{C}}
\newcommand{\bbE}{\mathbb{E}}
\newcommand{\bbR}{\mathbb{R}}
\DeclareMathAlphabet{\mathbsf}{OT1}{cmss}{bx}{n}
\DeclareMathAlphabet{\mathssf}{OT1}{cmss}{m}{sl}
\DeclareSymbolFont{bsfletters}{OT1}{cmss}{bx}{n}
\DeclareSymbolFont{ssfletters}{OT1}{cmss}{m}{n}
\DeclareMathSymbol{\bsfGamma}{0}{bsfletters}{'000}
\DeclareMathSymbol{\ssfGamma}{0}{ssfletters}{'000}
\DeclareMathSymbol{\bsfDelta}{0}{bsfletters}{'001}
\DeclareMathSymbol{\ssfDelta}{0}{ssfletters}{'001}
\DeclareMathSymbol{\bsfTheta}{0}{bsfletters}{'002}
\DeclareMathSymbol{\ssfTheta}{0}{ssfletters}{'002}
\DeclareMathSymbol{\bsfLambda}{0}{bsfletters}{'003}
\DeclareMathSymbol{\ssfLambda}{0}{ssfletters}{'003}
\DeclareMathSymbol{\bsfXi}{0}{bsfletters}{'004}
\DeclareMathSymbol{\ssfXi}{0}{ssfletters}{'004}
\DeclareMathSymbol{\bsfPi}{0}{bsfletters}{'005}
\DeclareMathSymbol{\ssfPi}{0}{ssfletters}{'005}
\DeclareMathSymbol{\bsfSigma}{0}{bsfletters}{'006}
\DeclareMathSymbol{\ssfSigma}{0}{ssfletters}{'006}
\DeclareMathSymbol{\bsfUpsilon}{0}{bsfletters}{'007}
\DeclareMathSymbol{\ssfUpsilon}{0}{ssfletters}{'007}
\DeclareMathSymbol{\bsfPhi}{0}{bsfletters}{'010}
\DeclareMathSymbol{\ssfPhi}{0}{ssfletters}{'010}
\DeclareMathSymbol{\bsfPsi}{0}{bsfletters}{'011}
\DeclareMathSymbol{\ssfPsi}{0}{ssfletters}{'011}
\DeclareMathSymbol{\bsfOmega}{0}{bsfletters}{'012}
\DeclareMathSymbol{\ssfOmega}{0}{ssfletters}{'012}
\newcommand{\tili}{\widetilde{i}}
\newcommand{\hatM}{\widehat{M}}
\newcommand{\tilM}{\widetilde{M}}
\newcommand{\hatbM}{\widehat{\bM}}
\newcommand{\tilv}{\widetilde{v}}
\newcommand{\tilbz}{\widetilde{\bz}}
\newcommand{\tilbZ}{\widetilde{\bZ}}
\newcommand{\bLambda}{\bm{\Lambda}}
\newcommand{\bSigma	}{\bm{\Sigma}}
\newcommand{\bPsi}{\bm{\Psi}}
\def\norm#1{\left\| #1 \right\|}
\def\norm2#1{\left\| #1 \right\|_2}
\def\norm22#1{\left\| #1 \right\|_2^2}
\DeclareMathOperator{\diag}{diag}
\DeclareMathOperator{\rank}{rank}
\newcommand{\qednew}{\nobreak \ifvmode \relax \else
      \ifdim\lastskip<1.5em \hskip-\lastskip
      \hskip1.5em plus0em minus0.5em \fi \nobreak
      \vrule height0.75em width0.5em depth0.25em\fi}
\DeclareMathOperator{\Tr}{Tr}
\title{Magnetic MIMO Signal Processing and Optimization for Wireless Power Transfer}
\author{Gang Yang, \IEEEmembership{Member, IEEE}, Mohammad R. Vedady Moghadam, \IEEEmembership{Member, IEEE}, and Rui Zhang, \IEEEmembership{Fellow, IEEE}
\thanks{Manuscript received August 21, 2016; revised December 27, 2016 and February 1, 2017; accepted February 9, 2017. The work of G. Yang was supported in part by the National Natural Science Foundation of China under Project 61601100. This work was presented in part at the IEEE International Conference on Acoustics,
Speech, and Signal Processing, Shanghai, China, March 20-25, 2016.}
\thanks{G.~Yang was with the Department of Electrical and Computer Engineering, National University of Singapore, Singapore 117583. He is now with the National Key Laboratory of Science and Technology on Communications, University of Electronic Science and Technology of China, Chengdu 611731, China (e-mail: yanggang@uestc.edu.cn).}
\thanks{M. R. V. Moghadam is with the Department of Electrical and Computer Engineering, National University of Singapore, Singapore 117583 (email: elemrvm@nus.edu.sg).}
\thanks{R. Zhang is with the Department of Electrical and Computer Engineering, National University of Singapore, Singapore 117583, and also with the Institute for Infocomm Research, Agency for Science, Technology and Research, Singapore 138632 (e-mail: elezhang@nus.edu.sg).}
\vspace{-5mm}}
\begin{document}
\begin{spacing}{1.0}
\maketitle
\begin{abstract}
In magnetic resonant coupling (MRC) enabled multiple-input multiple-output (MIMO) wireless power transfer (WPT) systems, multiple transmitters (TXs) each with one single coil are used to enhance the efficiency of simultaneous power transfer to multiple single-coil receivers (RXs) by constructively combining their induced magnetic fields at the RXs, a technique termed ``magnetic beamforming''.
\textcolor{black}{In this paper, we study the optimal magnetic beamforming design in a multi-user MIMO MRC-WPT system.}
\textcolor{black}{We introduce the multi-user power region that constitutes all the achievable power tuples for all RXs, subject to the given total power constraint over all TXs as well as their individual peak voltage and current constraints.
We characterize  each boundary point of the power region by maximizing the sum-power deliverable to all RXs subject to their  minimum harvested power constraints, which are proportionally set based on a given power-profile vector to ensure fairness.}
For the special case without the TX peak voltage and current constraints, we derive the optimal TX current allocation for the single-RX setup in closed-form as well as that for the multi-RX setup by applying the techniques of semidefinite relaxation (SDR) and time-sharing.
In general, the problem is a non-convex quadratically constrained quadratic programming  (QCQP), which is difficult to solve.
For the case of one single RX, we show that the SDR of the problem is tight, and thus the problem can be efficiently solved. For the general case with multiple RXs, based on SDR we obtain two approximate solutions by applying the techniques of time-sharing and randomization, respectively.
\textcolor{black}{Moreover, for practical implementation of magnetic beamforming, we propose a novel signal processing method to estimate the magnetic MIMO channel due to the mutual inductances  between TXs and RXs.}
Numerical results show that our proposed magnetic channel estimation and \textcolor{black}{adaptive} beamforming schemes are practically effective, and can significantly improve the power transfer efficiency and multi-user performance trade-off in MIMO MRC-WPT systems compared to the benchmark scheme of uncoordinated WPT with fixed  identical TX current.
\end{abstract}
\begin{keywords}
Wireless power transfer, magnetic resonant coupling, magnetic MIMO, magnetic beamforming, magnetic channel estimation, multi-user power region, time-sharing, semidefinite relaxation.
\end{keywords}
\section{Introduction}\label{introduction}
\PARstart{N}{ear-field} wireless power transfer (WPT) has drawn significant interests recently due to its high efficiency for delivering power to electric loads without the need of any wire. Near-field WPT can be realized by inductive coupling (IC) for short-range applications within centimeters, or magnetic resonant coupling (MRC) for mid-range applications up to a couple of meters. Although short-range WPT has been in widely commercial use (e.g., electric toothbrushes), mid-range WPT is still largely under research and prototype.
In $2007$, a milestone experiment has successfully demonstrated that based on strongly coupled magnetic resonance, a single transmitter (TX) is able to transfer $60$ watts of power wirelessly with $40\%$--$50\%$ efficiency to a single receiver (RX) at a distance about $2$ meters. Motivated by this landmark experimental result, the research in MRC enabled WPT (MRC-WPT) has grown fast and substantially (see e.g., \cite{HuiLee14} and the references therein).

\textcolor{black}{MRC-WPT with generally multiple TXs and/or multiple RXs has been studied in the literature \cite{AhnHong13, LangSarris14, JadidianKatabi14, RezaZhang16TISPN, CasanovaLin09,KursMoffattSoljacic10}.
Under the multiple-input single-output (MISO) setup, \cite{AhnHong13} has studied an MRC-WPT system with two TXs and one single RX, while the analytical results proposed in this paper cannot be directly extended to the case with more than two TXs. In~\cite{LangSarris14}, a convex optimization problem has been formulated to maximize the efficiency of MISO MRC-WPT by jointly optimizing all TX currents together with the RX impedance.}
\textcolor{black}{However, the study in \cite{LangSarris14} has not considered the practical circuit constraints at individual TXs, such as peak voltage and current  constraints, and also its solution cannot be applied to the muti-RX setup.
Recently, \cite{JadidianKatabi14} has reported a wireless charger with an array of TX coils which can efficiently charge a mobile phone $40$cm away from the charging unit, regardless of the phone's orientation.}
\textcolor{black}{On the other hand, under the single-input multiple-output (SIMO) setup, an MRC-WPT system with one single TX and multiple RXs has been studied in \cite{RezaZhang16TISPN}, in which the load resistances of all RXs are jointly optimized to minimize the total transmit power drawn while achieving fair power delivery to the loads at different RXs,   even subject to their near-far distances to the TX.
For the general multiple-input  multiple-output (MIMO) setup, in \cite{CasanovaLin09} it has been experimentally demonstrated  that employing multiple TX coils can enhance the power delivery to multiple RXs simultaneously, in terms of both  efficiency and deliverable power.
However, this work has not addressed  how to design the system parameters to achieve optimal performance.}

Currently, there are two main industrial organizations on standardizing wireless charging, namely, the Wireless Power Consortium (WPC) which developed the ``Qi'' standard based on magnetic induction, and the Alliance for Wireless Power (A4WP) which developed the ``Rezence'' specification based on magnetic resonance.
\textcolor{black}{The Rezence specification advocates a superior charging range, the capability to charge multiple devices concurrently, and the use of two-way Bluetooth communication between the charger and devices for real-time charging control.
These features make Rezence a promising technology for high-performance wireless charging in future.}
However, in the current Rezence specification, one single TX coil is used in the power transmitting unit, i.e., only the SIMO MRC-WPT is considered.
Generally, deploying multiple TXs can help focusing their generated magnetic fields more efficiently toward one or more RXs simultaneously~\cite{JadidianKatabi14}, thus achieving a magnetic beamforming gain, in a manner analogous to multi-antenna beamforming in the far-field wireless information and/or power transfer based on electromagnetic (EM) wave radiation  \cite{GershmanSPM10, MIMOWIPTZhang13, ShenLiChangTSP14, ShiPengXuHongCaiTSP16}. It is worth noting that applying signal processing and optimization techniques for improving the efficiency of far-field WPT systems has recently drawn significant interests (see, e.g., the work on transmit beamforming design \cite{YangHoGuan13, SWIPTXuZhangTSP14}, channel acquisition method \cite{XuZhangTSP14, XuZhangTSP16}, waveform optimization  \cite{ClerckxBayguzinaTSP16}, and power scheduling policy for WPT networks~\cite{XiaAissaTSP15}). However, to our best knowledge, there has been no prior work on magnetic beamforming optimization under practical TX circuit constraints, for a MIMO MRC-WPT system with arbitrary numbers of TXs and RXs, which motivates \textcolor{black}{our work}. The results of this paper can be potentially applied in e.g., the Rezence specification for the support of multi-TX WPT \textcolor{black}{for} performance enhancement.

\textcolor{black}{In this paper, as shown in Fig. \ref{fig:Fig_app}, we consider a general MIMO MRC-WPT system with multiple RXs and multiple TXs where the TXs' source currents (or equivalently voltages) can be adjusted such that their induced magnetic fields are optimally combined at each of the RXs, to maximize the power delivered. }
We introduce the multi-user power region to characterize the optimal performance trade-offs among the RXs, which constitutes all the achievable power tuples deliverable to all RXs subject to the given total consumed power constraint over all TXs as well as practical peak voltage and current constraints at individual TXs.
\begin{figure}[t!]
\centering
\includegraphics[width=.9\columnwidth]{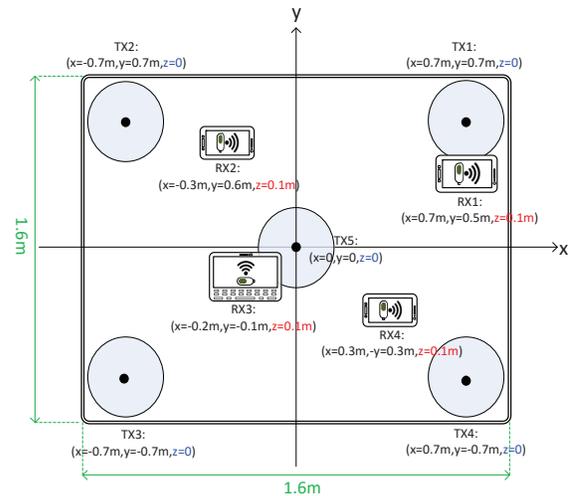}
\caption{\textcolor{black}{Example setup  of our considered MIMO MRC-WPT system: a rectangular table with five built-in wireless chargers attached below its surface and four receivers randomly  placed on it for wireless charging.}}
\label{fig:Fig_app} \vspace{-2mm}
\end{figure}

The main contributions of this paper are summarized as follows.
\begin{itemize}
\item \textcolor{black}{In order to characterize the optimal performance trade-offs among all RXs by finding all the boundary points of the multi-user power region, we apply the technique of power profile.
Specifically, we obtain each boundary point by maximizing the sum-power deliverable to all \textcolor{black}{RXs} subject to the minimum harvested power constraints at different RX loads which are proportionally set based on a given power-profile vector.
We propose an iterative algorithm to solve \textcolor{black}{this} problem, which requires to solve a TX sum-power minimization problem at each iteration to optimally allocate \textcolor{black}{the} TX currents.}
\item For the special case of one single RX, identical TX resistances and without the TX peak voltage and current constraints, we show that the optimal current at each TX should be proportional to the mutual inductance between its TX coil and the RX coil. This optimal magnetic beamforming design for MISO MRC-WPT system is analogous to the maximal-ratio-transmission (MRT) based beamforming in the far-field radiation-based  WPT~\cite{MIMOWIPTZhang13}.
\item  In general, the TX sum-power minimization  problem is \textcolor{black}{a} non-convex quadratically constrained quadratic programming (QCQP). For the case of one single RX, \textcolor{black}{with arbitrary TX resistances and the peak voltage and current constraints at individual TXs applied}, we show that the semidefinite relaxation (SDR) of the problem is tight, and thus the problem can be efficiently solved via the semidefinite  programming (SDP) by using existing optimization software such as CVX~\cite{CVXTool}.
For the general case with multiple RXs, based on SDR, we obtain two approximate solutions by applying the techniques of time-sharing and randomization, respectively. In particular, for the special case without the TX peak and voltage constraints, the time-sharing based solution is shown to be optimal.
\item \textcolor{black}{For practical implementation of magnetic bemaforming, it is essential to obtain the magnetic channel knowledge on the mutual inductance  between each pair of TX coil and RX coil. To this end, we propose a novel  magnetic MIMO channel estimation scheme, which is shown to be efficient and accurate \textcolor{black}{by simulations}. The channel estimation and feedback design for MIMO or multi-antenna based wireless communication systems has been extensively studied in the literature (see. e.g., \cite{DJLoveAndrewsJSAC08} and the references therein). However, it is shown in this paper that the magnetic MIMO channel estimation problem in MRC-WPT has a different   structure, which cannot be directly solved by existing methods in wireless communication.}
\item By extensive numerical results, we show that our proposed magnetic beamforming designs are practically effective, and can significantly  enhance the energy efficiency as well as  the multi-user performance trade-off in MIMO MRC-WPT, as compared to the benchmark scheme of uncoordinated WPT with fixed identical \textcolor{black}{current at all TXs}.
\end{itemize}

\begin{table*}[t!]
\centering
\textcolor{black}{\caption{\textcolor{black}{List of main variable notations and their meanings}} \label{table1}
\small{
\begin{tabular}{l | l}
  \hline \hline
   Notation & Meaning\\
  \hline
  $N, Q$ & Number of TXs and RXs, respectively \\
  $n, q$ & Index for TXs and RXs, respectively \\
  $w$   & Operating angular frequency \\
  $v_{\textrm{tx}, n}$ & Phasor representation for complex voltage of TX $n$ \\
  $i_{\textrm{tx}, n}, \bar{i}_{\text{tx},n}, \hat{i}_{\text{tx},n}$ & Phasor representation for complex current, real-part and imaginary-part of current of TX $n$, respectively \\
  $\bi$, $\bar{\bi}$, $\hat{\bi}$ & TX current vector $\bi=[i_{\textrm{tx}, 1}~ \ldots~i_{\textrm{tx}, N}]^T$, its real-part and imaginary-part, respectively \\  %
$i_{\textrm{rx}, q}, \bar{i}_{\textrm{rx}, q}, \hat{i}_{\textrm{rx}, q}$ & Phasor representation for complex current, real-part and imaginary-part of current of RX $q$, respectively \\
$L_{{\textrm{tx}},  n}$, $C_{{\textrm{tx}}, n}$ & Self-inductance and capacitance of the $n$-th TX coil, respectively \\
$L_{{\textrm{rx}}, q}$, $C_{{\textrm{rx}}, q}$ & Self-inductance and capacitance of the $q$-th RX coil, respectively \\
$r_{\textrm{tx}, n}$ & Total source resistance of the $n$-th TX \\
$\bR$ & Diagonal resistance matrix $\bR = \diag \{r_{\textrm{tx}, 1}, \ldots, r_{\textrm{tx}, N}\}$ \\
$r_{ \textrm{rx,p}, q}$, $r_{ \textrm{rx,l}, q}$, $r_{\textrm{rx}, q}$ & Parasitic resistance, load resistance and total resistance of RX $q$, respectively \\
$M_{nq}$, $\tilM_{n k}$ & Mutual inductance between TX $n$ and RX $q$ / TX $k$ with $k \neq n$, respectively \\
$\boldm_q$ & Vector of mutual inductance between RX $q$ and all TXs \\
$\bM_q$ & Rank-one matrix $\bM_q=\boldm_q \boldm_q^T$ for RX $q$ \\
$\bB, \overline{\bB}, {\widehat{\bB}}$ & Impedance matrix, its real-part and imaginary-part, respectively \\ 
$\bB_n$ & Rank-one matrix $\bB_n = \bb_n \bb_n^H$, with $\bb_n$ denoting the $n$-th column of $\bB$ \\
$p_{\sf{tx}}$ & Total power drawn from all TXs \\
$p_{\textrm{rx}, q}$ & Power delivered to the load of RX $q$ \\
$P_T$ & Maximum total power drawn by all TXs \\
$V_n$, $A_{n}$ & Maximum amplitude of voltage and current of TX $n$, respectively \\
$\bm{\alpha}$ & Power-profile vector \\
$\bW_n$ & Rank-one matrix with the $n$-th diagonal element being one and others zero \\
$P$ & Sum-power delivered to all RXs \\
$\bX$ & Rank-one matrix $\bX=\bi \bi^H$ \\
$L$ & Rank of optimal SDR solution $\bX^{\star}$ \\
$\bV$ & Singular matrix of $\bX^{\star}$, $\bV=[\bv_1 \; \ldots \; \bv_L]$ \\ 
$\bLambda$ & $L$-order diagonal matrix with diagonal elements given by eigenvalues of $\bX^{\star}$ \\ 
$\tau_l$ & Transmission time of the $l$-th WPT slot in time-sharing based solution \\
$e_{\textrm{rx}, q, t}$ & Error of the $q$-th RX's current in the $t$-th channel-training slot \\
  \hline
\end{tabular}
}
}
\end{table*}

The rest of this paper is organized as follows. Section~\ref{system_model} introduces the system model for MIMO MRC-WPT. Section~\ref{sec: formulation} presents the problem formulation to characterize the boundary points of the multi-user power region. Section~\ref{sec: solution} presents the optimal and approximate solutions for the formulated problem under various setups. Section~\ref{sec: CE} presents the algorithms for magnetic MIMO channel estimation. Section~\ref{sec: simulation} provides the numerical results. Section~\ref{sec:conslusion} concludes the paper.

\textcolor{black}{The notations for main variables used in this paper are listed in Table~\ref{table1} for the ease of reading. Moreover,} we use the following math notations in this paper. $|\cdot|$ means the operation of taking the absolute value. $\bX \succcurlyeq 0$ means that the matrix $\bX$ is positive semidefinite (PSD). $\mathrm{Re} \{\cdot\}$ means the operation of taking the real part. $\Tr(\cdot)$ means the trace operation. $\bigcup$ is the union operation of sets. $\bbE[\cdot]$ denotes the statistical expectation. \textcolor{black}{$\bv \sim \calC \calN(\bm{\mu}, \bC)$} means that the random vector $\bv$ follows the circularly symmetric complex Gaussian (CSCG) distribution with mean vector \textcolor{black}{$\bm{\mu}$} and covariance matrix \textcolor{black}{$\bC$}. The $(\cdot)^T, (\cdot)^{\ast}$ and $(\cdot)^H$ represent the transpose, conjugate, and conjugate transpose \textcolor{black}{operations}, respectively.

\section{System Model} \label{system_model}
As shown in Fig.~\ref{fig:Fig1}, we consider a MIMO MRC-WPT system with $N \geq 1$ TXs each equipped with a single coil, and $Q \geq 1$ single-coil RXs.
\textcolor{black}{We assume that the RXs are all legitimate users  for wireless charging.}
Each TX $n$, $n=1, \ldots, N$,  is connected to a stable power source supplying sinusoidal voltage over time given by $\tilv_{\textrm{tx}, n}(t) = {\mathrm{Re}} \{v_{\textrm{tx}, n} e^{jwt}\}$, with $v_{\textrm{tx}, n}$ denoting the complex voltage and $w > 0$ denoting the operating angular frequency. Let $\tili_{\textrm{tx}, n}(t) = {\mathrm{Re}} \{i_{\textrm{tx}, n} e^{jwt}\}$ denote the steady-state current flowing through TX $n$, with the complex current $i_{\textrm{tx}, n}$. The current produces a time-varying magnetic flux in the $n$-th TX coil, which passes through the coils of all RXs and induces time-varying currents in them. Let $\tili_{q}(t) = {\mathrm{Re}} \{i_{\textrm{rx}, q} e^{jwt}\}$ denote the steady-state current in the $q$-th RX coil, $q=1, \ldots, Q$, with the complex current $i_{\textrm{rx}, q}$.


Let $M_{nq}$ and $\tilM_{n k}$ denote the mutual inductance between the $n$-th TX coil and the $q$-th RX coil, and the mutual inductance between the $n$-th TX coil and the $k$-th TX coil with $k \neq n$, respectively. The mutual inductance is a real number, either positive or negative, which depends on the physical characteristics of each pair of TX and RX coils such as their relative distance, orientations, etc. \cite{JadidianKatabi14}.\footnote{\textcolor{black}{In this paper, the values of  mutual inductances (i.e., magnetic channels) are assumed to be purely real, since our considered MRC-WPT system operates under the near-field condition for which EM wave radiation is negligible and hence the imaginary-part of each inductance value can be set as zero.}}
\textcolor{black}{Specifically, the negative sign of mutual inductance $M_{nq}$ ($\tilM_{n k}$) indicates that the current induced at the coil of RX  $q$ (TX $k$) due to the current flowing at the coil of TX $n$ is in the opposite of the reference direction assumed  (as shown in Fig. \ref{fig:Fig1}, the reference current direction at each TX/RX is set to be clockwise in this paper for convenience).}
In this paper, we assume that the mutual coupling between any pair of RX coils is negligible, \textcolor{black}{as shown in Table  \ref{table_inductance_new} later  for our considered numerical example}, due to their small sizes in practice and the assumption  that they are well  separated from each other.
\begin{figure} [t!]
\centering
\includegraphics[width=.85\columnwidth]{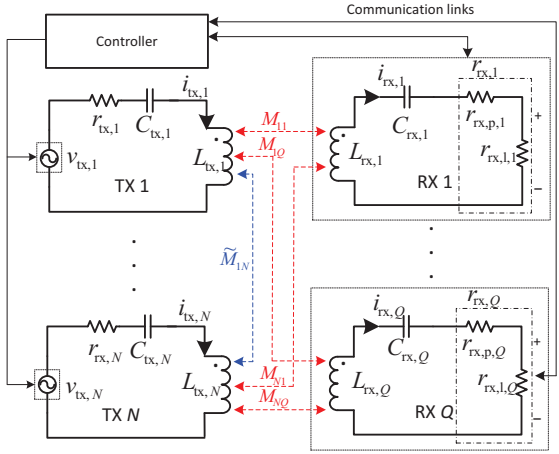}
\caption{System model of MIMO MRC-WPT.}
\label{fig:Fig1}
\vspace{-0.5cm}
\end{figure}

We denote the self-inductance and the capacitance of the $n$-th TX coil ($q$-th RX coil) by $L_{{\textrm{tx}},  n} >0$ ($L_{{\textrm{rx}}, q} > 0$) and $C_{{\textrm{tx}}, n} >0$ ($C_{{\textrm{rx}}, q} > 0$), respectively. The capacitance values are set as $C_{{\textrm{tx}}, n} = \frac{1}{L_{\textrm{tx}, n} w^2}$ and $C_{\textrm{rx}, q} = \frac{1}{L_{\textrm{rx}, q} w^2}$, such that all TXs and RXs have the same resonant angular frequency, $w$. Let $r_{\textrm{tx}, n} >0$ denote the total source resistance of the $n$-th TX. Define the diagonal resistance matrix as $\bR \triangleq \diag \{r_{\textrm{tx}, 1}, \ldots, r_{\textrm{tx}, N}\}$. The resistance of each RX $q$, denoted by $r_{\textrm{rx}, q}$, consists of the parasitic resistance $r_{ \textrm{rx,p}, q} >0$ and the load resistance $r_{ \textrm{rx,l}, q} >0$, i.e., $r_{\textrm{rx}, q} = r_{ \textrm{rx,p}, q} + r_{\textrm{rx,l}, q}$.
The load is assumed to be purely resistive.
\textcolor{black}{It is also assumed that the load resistance is sufficiently larger than the parasitic resistance at each RX $q$ such that $r_{ \textrm{rx,l}, q}/r_{\textrm{rx}, q} \approx 1$. This is practically required to ensure that most of the  energy harvested by the coil at each RX can be delivered to its load.}

\textcolor{black}{In our considered MRC-WPT system, we assume that there is a controller installed which can communicate with all TXs and RXs (e.g., using Bluetooth as in the Rezence specification) such that it can collect the information of all system parameters (e.g., RX loads and currents) required to design and implement magnetic beamforming.
We also assume that the RXs all have sufficient  initial energy stored in their batteries, which enables them to conduct the necessary current measurement and  send relevant  information to the central controller to implement magnetic beamforming.}
\textcolor{black}{However, for simplicity, we ignore the energy consumed for such operations at RXs.}
Last, for convenience, we treat the complex TX currents $i_{\textrm{tx}, n}$'s as design variables,\footnote{\textcolor{black}{In practice, it may be more convenient to use voltage source instead of current source. Therefore, after designing the TX currents $i_{\textrm{tx}, n}$'s, the corresponding voltages $v_{\textrm{tx}, n}$'s can be computed and set by the controller accordingly (see \eqref{eq:voltage_transmitter_n} and \eqref{eq:voltage_transmitter_n_vector}). Moreover, in the case of adjustable voltage sources, impedance matching  can be conducted in series with the sources, each of which can be adjusted in real time to match the current flowing in its corresponding TX to the optimal value obtained by magnetic beamforming design.}} which can be adjusted by the controller in real time to realize \textcolor{black}{adaptive} magnetic beamforming.

By applying Kirchhoff's circuit law to the $q$-th RX, we obtain its current $i_{\textrm{rx}, q}$ as
\begin{align}
 i_{\textrm{rx}, q} &= \frac{jw}{r_{\textrm{rx}, q}} \sum \limits_{n=1}^N M_{nq} i_{\textrm{tx}, n}. \label{eq:current_receiver_q}
\end{align}

\noindent Denote the vector of all TX currents as $\bi=[i_{\textrm{tx}, 1}~ \ldots~i_{\textrm{tx}, N}]^T$. Moreover,  denote the vector of mutual inductances between the $q$-th RX coil and all TX coils as \textcolor{black}{$\boldm_q = [M_{1q} \; \ldots \; M_{Nq}]^T$}, and define the rank-one matrix $\bM_q \triangleq \boldm_q \boldm_q^T$. From \eqref{eq:current_receiver_q}, the power delivered to the load of the $q$-th RX is
\begin{align}
  p_{\textrm{rx}, q} &= \frac{1}{2} |i_{\textrm{rx}, q}|^2 r_{ \textrm{rx}, q} =\frac{w^2}{2 r_{\textrm{rx}, q}} {\bi}^H \bM_q {\bi}.
  \label{eq:load_power_vec_complex}
\end{align}
Similarly, by applying Kirchhoff's circuit law to each TX $n$, we obtain its source voltage as
\begin{align}
v_{\textrm{tx}, n}
&= \left( r_{\textrm{tx}, n} +   \sum \limits_{q=1}^Q \frac{M_{nq}^2 w^2}{r_{\textrm{rx}, q}} \right) i_{\textrm{tx}, n} + \nonumber \\
&\qquad \sum \limits_{k \neq n}   \left( j w \tilM_{nk} +   \sum \limits_{q=1}^Q \frac{M_{nq} M_{qk} w^2}{r_{\textrm{rx}, q}}  \right) i_{\textrm{tx}, k}. \label{eq:voltage_transmitter_n}
\end{align}

\textcolor{black}{Next, we derive the total power drawn from all TXs in terms of the vector of TX currents ${\bi}$.}  Let us define an $N \times N$ impedance matrix $\bB $ as
\begin{align}
\bB  &= {\overline{\bB}}  + j {\widehat{\bB}},\label{eq_complex_B}
\end{align}
where the elements in ${\overline{\bB}}$ and $\widehat{\bB}$ are respectively given by
\begin{align}
{\overline{B}}_{nk}  &=
\left\{ \begin{array}{cl}
r_{\textrm{tx}, n} +   \sum \limits_{q=1}^Q \frac{M_{nq}^2 w^2}{r_{\textrm{rx}, q}} , &\mbox{if}\; k=n\\
\sum \limits_{q=1}^Q \frac{M_{nq} M_{qk} w^2}{r_{\textrm{rx}, q}}, \qquad &\mbox{otherwise;} \\
\end{array}
\right. \label{eq_complex_B_real}\\
{\widehat{B}}_{nk} &= \left\{ \begin{array}{cl}
0, &\mbox{if}\; k=n\\
- w \tilM_{ nk}, \qquad &\mbox{otherwise.} \\
\end{array}
\right.\label{eq_complex_B_imaginar}  \end{align}
Note that the matrices $\bB , \; {\overline{\bB}} $ and ${\widehat{\bB}}$ are all symmetric, since $M_{ nk} = M_{ kn}, \; \forall n \neq k$. Denote the $n$-th column of the matrices $\bB , \; \overline{\bB} , \; \widehat{\bB}$ by $\bb_n ,\; \overline{\bb}_n , \; \widehat{\bb}_n$, respectively. We also define the rank-one matrices $\bB_n \triangleq \bb_n \bb_n^H, n=1, \ldots, N$. It can be shown that both $\overline{\bB}$ and $\bB_n$'s are PSD matrices.
The matrix ${\overline{\bB}} $ can be also rewritten as
\begin{align}
{\overline{\bB}} = \bR + w^2 \sum \limits_{q=1}^Q \frac{\bM_q}{r_{\textrm{rx}, q}}. \label{eq:barB}
\end{align}
Accordingly, the source voltage of each TX $n$ given in  \eqref{eq:voltage_transmitter_n} can be equivalently re-expressed as
\begin{align}
  v_{\textrm{tx}, n}= {\bb}_n^H  \bi. \label{eq:voltage_transmitter_n_vector}
\end{align}
From \eqref{eq_complex_B} and \eqref{eq:voltage_transmitter_n_vector}, the total power drawn from all TXs is given by
\begin{align}
p_{\sf{tx}} &=\frac{1}{2} {\mathrm{Re}} \left\{ \sum \limits_{n=1}^N \bi^H {\bb}_n  i_{ n}\right\}
= \frac{1}{2}  {{\bi}}^H  {\overline{\bB}}  {{\bi}}. \label{eq:p_transmit_general_vec_complex}
\end{align}

Note that from \eqref{eq_complex_B_real}, it follows that $p_{\sf tx}$ in \eqref{eq:p_transmit_general_vec_complex} in general depends on the mutual inductances $M_{nq}$'s between all TXs and RXs, but does not depend on the mutual inductances $\tilM_{nk}$'s among the TXs.
\section{Problem Formulation}\label{sec: formulation}
In this section, we first introduce the multi-user power region to characterize the optimal performance trade-offs among all RXs in \textcolor{black}{a} MIMO MRC-WPT system \textcolor{black}{introduced} in Section~\ref{sec:region_define}. Then, we formulate an optimization problem to find each boundary point of the power region corresponding to a given ``power-profile'' vector.
\subsection{Multi-user Power Region }\label{sec:region_define}
In this subsection, we define the multi-user power region under practical circuit constraints at TXs.
In particular, the power region consists of all the achievable power tuples that can be received by all RXs subject to the following constraints: the total power drawn by all TXs needs to be no larger than a given maximum power $P_T$, i.e., $p_{\textrm{tx}} \leq P_{T}$; the peak amplitude of the voltage $v_{\textrm{tx}, n}$ (current $i_{\textrm{tx}, n}$ ) at each TX $n$ needs to be \textcolor{black}{no larger} than a  given threshold $V_n$ ($A_{n}$), i.e., $|v_{\textrm{tx}, n}| \leq V_{n}, \ |i_{\textrm{tx}, n}| \leq A_n, \forall n=1, \ldots,N$.
\textcolor{black}{In this case, it can be easily verified  that the maximum transmit power at each TX $n$ is indeed capped by $\frac{1}{2} V_n A_n$.  Accordingly, to avoid the trivial case that the constraint $p_{\sf{tx}} \leq P_T$ is never active, we consider that $\sum \nolimits_{n=1}^N \frac{1}{2} V_n A_n > P_T$ holds in this paper.} The power region is thus formally defined as
\begin{align}\label{eq:power_region}
  \calR &\triangleq \underset{\substack{p_{\textrm{tx}} \leq P_T, \ |v_{\textrm{tx}, n}| \leq V_n, \\ |i_{\textrm{tx},n}| \leq A_n, \ n=1,  \ldots,  N}}{\bigcup} \; (p_{\textrm{rx}, 1}, \ p_{\textrm{rx}, 2}, \ \ldots, \ p_{\textrm{rx}, Q}),
\end{align}
where $p_{\textrm{rx}, q}, \ v_{\textrm{tx}, n}, \ p_{\textrm{tx}}$ are given in \eqref{eq:load_power_vec_complex}, \eqref{eq:voltage_transmitter_n_vector}, and \eqref{eq:p_transmit_general_vec_complex}, respectively. Note that the union operation in~\eqref{eq:power_region} has considered the possibility that some power tuples may be achievable only through ``time-sharing (TS)'' of a certain set of achievable power tuples each corresponding to a different set of feasible $v_{\textrm{tx}, n}$'s and $i_{\textrm{tx},n}$'s.

Next, we apply the technique of power-profile vector~\cite{RezaZhang16TISPN} to characterize all the boundary points of the power region, where each boundary power tuple corresponds to a Pareto-optimal performance trade-off among the RXs. Let $P$ denote the sum-power delivered to all RXs, i.e., $P=\sum \nolimits_{q=1}^Q p_{\textrm{rx}, q}$. Accordingly, we set  $p_{\textrm{rx}, q}=\alpha_q P$, where the coefficients $\alpha_q$'s are subject to $\sum \nolimits_{q=1}^Q \alpha_q = 1$ and $\alpha_q \geq 0, \ \forall q$. The vector $\bm{\alpha}=[\alpha_1 \ \alpha_2 \ \ldots \ \alpha_Q]^T$ is a given power-profile vector that specifies the proportion of the sum-power delivered to each RX $q$.
With each given $\bm{\alpha}$, the maximum achievable sum-power $P$ thus corresponds to a boundary point of the power region; Fig.~\ref{fig:Fig1A} illustrates the characterization of the power region boundary via the power profile technique for the case of $Q=2$ RXs.
\subsection{Optimization Problem}\label{sec:formulation}
In this subsection, we formulate an optimization problem to find different boundary points of the power region. Denote the $N$-dimensional complex space by $\bbC^N$, and let $\bW_n$ denote the rank-one matrix with the $n$-th diagonal element being one and all other elements being zero.
\begin{figure} [t!]
	\centering	\includegraphics[width=.7\columnwidth]{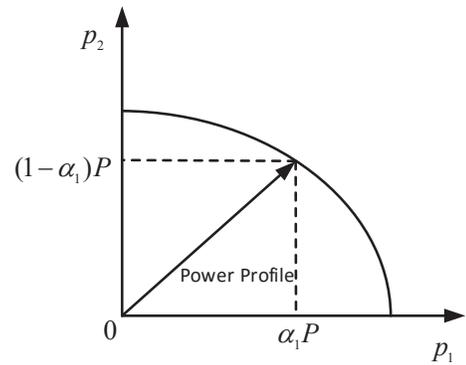}
	\caption{Illustration of characterization of power region boundary via the technique of power profile in a two-user \textcolor{black}{case}.} \label{fig:Fig1A}
\vspace{-0.2cm}
\end{figure}

From the definition in \eqref{eq:power_region}, each boundary point of the power region $\calR$ can be obtained by solving the following RX sum-power maximization problem with a given power-profile vector $\bm{\alpha}$ (for the case when TS is not required to achieve the boundary point of the multi-user power region corresponding to the given power profile $\bm{\alpha}$; see Proposition \ref{sec:TSopt_wo_peak_cons} in Section~\ref{sec: solution} for the case when TS is required),
%
\begin{subequations}\label{eq:optimP0}
\begin{align}
\mathrm{(P0)}: \  \ \underset{ {\bi} \in \bbC^N }{\text{max}} \ \ &P
\label{eq:rewardP0} \\
\text{s.t.} \ \
&\frac{w^2}{2 r_q} {\bi}^H \bM_q {\bi} \geq \alpha_q P, \;  q=1, \ldots, Q \label{eq:const1P0} \\
&\bi^H \bB_n \bi \leq V_{ n}^2, \;  n=1,\ldots, N \label{eq:const2P0} \\
&\bi^H \bW_n \bi \leq A_{ n}^2, \; n=1,\ldots, N \label{eq:const3P0} \\
&\frac{1}{2}   {{\bi}}^H  {\overline{\bB}} {{\bi}}  \leq P_T, \label{eq:const4P0}
\end{align}
\end{subequations}
where the inequalities \eqref{eq:const1P0}, \eqref{eq:const2P0} and \eqref{eq:const4P0} are due to \eqref{eq:load_power_vec_complex}, \eqref{eq:voltage_transmitter_n_vector}, and \eqref{eq:p_transmit_general_vec_complex}, respectively.
Given a power-profile vector $\bm{\alpha}$, $\mathrm{(P0)}$ can be solved by a bisection search over $P$, where in each search iteration, it suffices to solve a feasibility problem that checks whether all constraints of $\mathrm{(P0)}$ can be satisfied for some given $P$.
The converged optimal value of $P$ is denoted by $P^{\star}$.

The feasibility problem can be equivalently solved by first obtaining the minimum sum-power drawn from all TXs  by solving the following problem, denoted by $p_{\textrm{tx}}^{\star}$, and then comparing it with the given total power constraint for all TXs, $P_T$.  \textcolor{black}{Specifically, the TX  sum-power minimization problem is given by}
\begin{subequations}\label{eq:optimP1}
\begin{align}
\mathrm{(P1)}: \  \  \underset{ {\bi} \in \bbC^N }{\text{min}} \ \
&p_{\textrm{tx}} = \frac{1}{2}   {{\bi}}^H  {\overline{\bB}} {{\bi}} \label{eq:rewardP1} \\
\text{s.t.} \ \
&\frac{w^2}{2 r_q}  {\bi}^H \bM_q {\bi} \geq \alpha_q P, \; q=1, \ldots, Q \label{eq:const1P1} \\
&\bi^H \bB_n \bi \leq V_{ n}^2, \;  n=1, \ldots, N \label{eq:const2P1} \\
&\bi^H \bW_n \bi \leq A_{ n}^2, \;  n=1, \ldots, N. \label{eq:const3P1}
\end{align}
\end{subequations}

To summarize, the overall algorithm for solving (P0) is given in Algorithm \ref{Algorithm}.
Note that in the rest of this paper, we focus on solving problem $\mathrm{(P1)}$. However, $\mathrm{(P1)}$ is in general a non-convex QCQP  problem \cite{ConvecOptBoyd04} due to the constraints in \eqref{eq:const1P1}. Although solving non-convex QCQPs is difficult in general \cite{LuoZhangSPM10}, we study the optimal and approximate solutions to $\mathrm{(P1)}$ under various setups in Section~\ref{sec: solution}. Notice that for solving $\mathrm{(P1)}$, it is essential for the controller to have the knowledge of the mutual inductance values between any pair of TX coils as well as any pair of TX and RX coils. In practice, the TX-TX mutual inductance is constant with fixed TX positions and thus can be measured offline and stored in the controller.
However, due to the mobility of RXs (such as phones, tablets), the TX-RX mutual inductance is time-varying in general and thus needs to be estimated periodically. The magnetic channel estimation problem will be addressed \textcolor{black}{later} in Section \ref{sec: CE}.
\begin{algorithm}[t!] \small
	\caption{: Algorithm for $\mathrm{(P0)}$} \label{Algorithm}
	\begin{algorithmic}[1]
		\STATE Initialization: $P_{\min}=0$,   \textcolor{black}{$P_{\max}=P_T$}, and a small positive number $\epsilon$ (\textcolor{black}{$\epsilon=10^{-2}$ is set in our simulations}). \\
		\WHILE{$P_{\max} - P_{\min} > \epsilon$}
		\STATE $P=(P_{\min}+P_{\max})/2$.
		\IF{$\mathrm{(P1)}$ is not feasible}
		\STATE Go to step 8.
		\ELSIF{$p_{\textrm{tx}}^{\star} (P) > P_T$}
		\STATE Obtain the optimal solutions as $\bi^{\star}(P)$. 
		\STATE $P_{\max} \leftarrow P$.
		\ELSE
		\STATE Obtain the optimal solutions as $\bi^{\star}(P)$.
		\STATE $P_{\min} \leftarrow P$.
		\ENDIF
		\ENDWHILE
		\RETURN the optimal value and solution of $\mathrm{(P0)}$ as $P^{\star}=P$ and $\bi^{\star}=\bi^{\star}(P^{\star})$, respectively.
	\end{algorithmic}
\end{algorithm}

Last,  note that an alternative  approach to characterize the boundary of the multi-user power region is to solve a sequence of weighted sum-power maximization (WSPMax) problems for the RXs.
Compared to the TX sum-power minimization problem $\mathrm{(P1)}$ with the given RX minimum load power constraints, the WSPMax problem with the given maximum total TX power can be considered as its ``dual'' problem. \textcolor{black}{In practice, how to select weights in WSPMax so as to satisfy the minimum load power requirement at each RX is challenging.} Hence, in this paper, we study $\mathrm{(P1)}$ due to its practical usefulness in satisfying any given RX load power requirements.
\section{Solutions to Problem ($\mathrm{P1}$)}\label{sec: solution}
In this section, we first present the optimal solution to ($\mathrm{P1}$) for the special case without TX peak voltage and current constraints~\eqref{eq:const2P1} and~\eqref{eq:const3P1}, and then study the solution to ($\mathrm{P1}$) for the general case with all constraints.
\vspace{-1mm}
\subsection{Optimal Solution to (${P1}$) without Peak Voltage and Current Constraints} \label{sec:solution_wo_peak_cons}
In this subsection, we consider ($\mathrm{P1}$) for the ideal case without the TX peak voltage and current constraints given in \eqref{eq:const2P1} and \eqref{eq:const3P1}, respectively, to obtain useful insights and the performance limit of magnetic beamforming.

Denote the $N$-dimensional real space by $\bbR^N$. Let $\bi=\bar{\bi} + j \hat{\bi}$, where $\bar{\bi}, \ \hat{\bi} \in \bbR^N$. It is then observed that the real-part $\bar{\bi}$ and the imaginary-part $\hat{\bi}$ contribute in the same way to the total TX power in \eqref{eq:rewardP1} as well as the delivered load power in \eqref{eq:const1P1}, since both $\overline{\bB}$ and $\bM_q$'s are symmetric matrices. As a result, we can set $\hat{\bi}=\mathbf{0}$ without loss of generality and adjust $\bar{\bi}$ only, i.e., we need to solve
%
\begin{subequations}\label{eq:optimP2}
\begin{align}
\mathrm{(P2)}: \ \ \underset{ \bar{\bi}  \in \bbR^N}{\text{min}} \ \
& \frac{1}{2}  {\bar{\bi}}^T  {\overline{\bB}} {\bar{\bi}}   \label{eq:rewardP2} \\
\text{s.t.} \ \
&\frac{w^2}{2 r_{\textrm{rx}, q}}  {\bar{\bi}}^T \bM_q {\bar{\bi}} \geq \alpha_q P, \; q=1, \ldots, Q. \label{eq:const1P2}
\end{align}
\end{subequations}
Denote the space of $N$-order real matrices by $\bbR^{N \times N}$. Let $\bX = {\bar{\bi}} {\bar{\bi}}^T$. The SDR of $\mathrm{(P2)}$ is thus given by
%
\begin{subequations}\label{eq:optimP2-SDR}
\begin{align}
(\mathrm{P2}\mathrm{-SDR}): \ \ \underset{ \bX \in \bbR^{N \times N}}{\text{min}} \ \
&\frac{1}{2}   \Tr \left( {\overline{\bB}}  \bX \right)  \\
\text{s.t.} \ \ & \Tr \left( {\bM_q} \bX \right) \geq \frac{2 r_{\textrm{rx}, q} \alpha_q P}{w^2},        \nonumber \\
         &\qquad \quad  q=1, 2, \ldots, Q  \label{eq:const1P2SDP} \\
&\ \bX \succcurlyeq 0. \label{eq:const2P2SDP}
\end{align}
\end{subequations}

In general, $(\mathrm{P2}\mathrm{-SDR})$ is a convex relaxation of $(\mathrm{P2})$ by dropping the rank-one constraint on $\bX$. This relaxation is tight, if and only if the solution obtained for $(\mathrm{P2}\mathrm{-SDR})$, denoted by $\bX^{\star}$, is of rank one. In the following, we discuss the solutions to $(\mathrm{P2}\mathrm{-SDR})$ as well as that for $(\mathrm{P2})$ for the two cases with one single RX and multiple RXs, respectively.
\subsubsection{Single-RX Case}
Let $\bI_N$ denote the $N$-order identity matrix. For the case of single RX (i.e., RX $1$ with $\alpha_1=1$), the optimal solution to $\mathrm{(P2)}$ is obtained in closed-form as follows.
\begin{mythe}\label{the:solution_specialP2_Q1}
For the case of $Q=1$, the optimal solution to $\mathrm{(P2)}$ is ${\bar{\bi}}^{\star} = \beta \bu_1$, where $\beta$ is a constant such that the constraint \eqref{eq:const1P2} holds with equality, and $\bu_1$ is the eigenvector associated with the minimum eigenvalue, denoted by $\psi_1$, of the matrix
\begin{align}
  \bT=\bR + \frac{w^2 (1 -v^{\star})}{r_{\emph{\textrm{rx}}, 1}} \bM_1, \label{eq:solution_wo_constraint}
\end{align}
where $v^{\star}$ is chosen such that $\psi_1 =0$.
Particularly, for the case of identical TX resistances,  i.e., $\bR=r \bI_N$ with $r>0$, the optimal solution to ($\mathrm{P2}$) is simplified to
 \begin{align}
   {\bar{\bi}}^{\star} = \frac{\beta \boldm_1}{\| \boldm_1 \|_2}\label{eq:solution_wo_constraint_common_r}.
\end{align}
\end{mythe}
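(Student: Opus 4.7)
The problem (P2) with $Q=1$ is a single-constraint (possibly non-convex) quadratic program in $\bbR^N$, so my plan is to invoke Lagrangian duality and then read off the optimizer from the resulting KKT system. For a single-constraint QCQP the S-procedure guarantees zero duality gap (equivalently, (P2-SDR) is tight), and Slater's condition is easily checked: since $\bM_1 = \boldm_1 \boldm_1^T$ is a nonzero rank-one matrix whenever the RX couples to at least one TX, taking $\bar{\bi}$ to be a sufficiently large multiple of $\boldm_1$ strictly satisfies \eqref{eq:const1P2}. Hence KKT conditions are both necessary and sufficient for global optimality.

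First I would form the Lagrangian with multiplier $v \ge 0$ for the load-power constraint and rewrite the Hessian in $\bar{\bi}$ as $\bT(v) = \overline{\bB} - v \, \frac{w^2}{r_{\textrm{rx},1}} \bM_1$; using $\overline{\bB} = \bR + \frac{w^2}{r_{\textrm{rx},1}} \bM_1$ from \eqref{eq:barB} specialized to $Q=1$, this collapses to $\bT(v) = \bR + \frac{w^2(1-v)}{r_{\textrm{rx},1}} \bM_1$, which matches the theorem statement verbatim once $v$ is identified with the dual optimizer $v^*$. For the dual function to be finite one needs $\bT(v) \succcurlyeq 0$. At the dual optimum $v^*$, $\bT(v^*)$ must be singular (i.e.\ $\psi_1 = 0$): otherwise the Lagrangian minimum over $\bar{\bi}$ is uniquely attained at $\bar{\bi}=0$, which is primal-infeasible, contradicting strong duality with a nonzero primal optimum.

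Second, KKT stationarity forces $\bT(v^*)\bar{\bi}^* = 0$, so $\bar{\bi}^*$ lies in the nullspace of $\bT(v^*)$, giving $\bar{\bi}^* = \beta \bu_1$ with $\bu_1$ the unit eigenvector associated with $\psi_1 = 0$. Since $v^* > 0$ (the unconstrained minimum $\bar{\bi}=0$ is infeasible), complementary slackness makes \eqref{eq:const1P2} active, which uniquely fixes $|\beta|$; both signs yield the same objective value. This establishes the first claim. For the special case $\bR = r\bI_N$, the matrix $\bT(v^*) = r\bI_N + \frac{w^2(1-v^*)}{r_{\textrm{rx},1}} \boldm_1\boldm_1^T$ is a rank-one perturbation of a multiple of the identity, so its spectrum is immediate: eigenvalue $r$ on the $(N-1)$-dimensional orthogonal complement of $\boldm_1$, and eigenvalue $r + \frac{w^2(1-v^*)}{r_{\textrm{rx},1}} \|\boldm_1\|_2^2$ along $\boldm_1$. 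Setting the minimum to zero uniquely determines $v^* = 1 + r\, r_{\textrm{rx},1}/(w^2\|\boldm_1\|_2^2) > 1$ and forces $\bu_1 = \boldm_1/\|\boldm_1\|_2$, yielding $\bar{\bi}^* = \beta \boldm_1/\|\boldm_1\|_2$.

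The main obstacle is justifying strong duality for a non-convex QCQP; I would dispatch this by citing the S-lemma, or equivalently by noting that any optimal solution of the single-constraint SDR (P2-SDR) admits a rank-one decomposition. Everything else is a routine eigen-analysis of $\bR + c\,\boldm_1\boldm_1^T$: for general diagonal $\bR \succ 0$ the matrix determinant lemma gives a closed form $v^* = 1 + 1/\bigl(\tfrac{w^2}{r_{\textrm{rx},1}}\boldm_1^T\bR^{-1}\boldm_1\bigr)$ and $\bu_1 \propto \bR^{-1}\boldm_1$, which collapses to the stated form when $\bR = r\bI_N$.
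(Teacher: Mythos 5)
Your proposal is correct and follows essentially the same route as the paper's Appendix A: form the Lagrangian with multiplier $v$, observe that dual feasibility forces $\bT(v^\star)=\bR+\tfrac{w^2(1-v^\star)}{r_{\textrm{rx},1}}\bM_1$ to be singular with $\psi_1=0$, and read off $\bar{\bi}^\star=\beta\bu_1$ from the nullspace with the constraint active by complementary slackness. Two minor points in your favor: invoking the S-lemma to justify zero duality gap for this single-constraint non-convex QCQP is more rigorous than the paper's appeal to Slater's condition alone, and your value $v^\star = 1 + r\,r_{\textrm{rx},1}/(w^2\|\boldm_1\|_2^2)$ correctly carries the $\|\boldm_1\|_2^2$ factor that the paper's eigenvalue computation drops — though since $v^\star$ does not enter the final expression \eqref{eq:solution_wo_constraint_common_r}, the theorem's conclusion is unaffected.
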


\begin{proof}
Please refer to Appendix~\ref{sec:proof_lemma2}.
\end{proof}

Theorem \ref{the:solution_specialP2_Q1} implies that for the case of single RX and identical TX resistances, the optimal current of each TX $n$ is proportional to the mutual inductance $M_{n1}$ between the TX $n$ and RX $1$.  \textcolor{black}{This is analogous to the maximal-ratio-transmission (MRT) based beamforming in the far-field wireless communication~\cite{GershmanSPM10}. However, magnetic beamforming operates in the near-field and thus the phase of each TX current only needs to take the  value of $0$ or $\pi$, i.e., the current is a positive or negative real number depending on its positive or negative mutual inductance with the RX, while in wireless communication  beamforming operates over the far-field, and as a result, the beamforming weight at each transmit antenna needs to be of the opposite phase of that of the wireless channel, which can be an arbitrary value within $0$ and $2\pi$.}
\subsubsection{Multiple-RX Case}\label{sec:MU_woPeak}
For the general case of multiple RXs, $(\mathrm{P2}\mathrm{-SDR})$ is a separable SDP with $Q$ constraints. We directly obtain the following result from~\cite[Thm. 3.2]{HuangPalomar10}.
\begin{mypro} \label{sec:rank_wo_peak_cons}
For the case of $Q \geq 1$, the rank of the optimal solution to $(\mathrm{P2}\mathrm{-SDR})$ is upper-bounded by
  \begin{align}
  \rank  \left( \bX^{\star} \right) \leq \sqrt{Q}. \label{eq:bound_wo_peak_cons}
\end{align}
\end{mypro}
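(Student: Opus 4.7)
The plan is to recognize the proposition as a direct corollary of the rank-reduction theorem of Huang and Palomar cited by the authors, so the proof is essentially an identification of (P2-SDR) with the standard form treated there, followed by a substitution of parameters.

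First I would observe that (P2-SDR) is a real separable SDP with a single PSD block $\bX \in \bbR^{N \times N}$ (that is, $L=1$ in the language of the cited theorem), linear objective $\frac{1}{2}\Tr(\overline{\bB}\bX)$, and exactly $Q$ affine trace-inner-product inequality constraints, namely $\Tr(\bM_q \bX) \geq \frac{2 r_{\textrm{rx},q} \alpha_q P}{w^2}$ for $q=1,\ldots,Q$, together with the conic constraint $\bX \succcurlyeq 0$. Thus the number of non-conic affine constraints is $M=Q$. Before invoking the theorem I would also confirm that the problem is solvable: feasibility holds for any $P$ below the converged value returned by Algorithm~1, and boundedness below follows from $\overline{\bB} \succcurlyeq 0$ (established after \eqref{eq:barB}), so an optimal $\bX^{\star}$ exists.

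Next, I would apply Theorem~3.2 of Huang--Palomar, which, for a separable SDP with $L$ PSD matrix blocks and $M$ affine constraints, guarantees the existence of an optimal solution $\{\bX_l^{\star}\}$ satisfying $\sum_{l=1}^{L} \bigl(\rank(\bX_l^{\star})\bigr)^2 \leq M$. With $L=1$ and $M=Q$ this specializes to $\bigl(\rank(\bX^{\star})\bigr)^2 \leq Q$, and taking square roots yields~\eqref{eq:bound_wo_peak_cons}. I would mention (for the reader's intuition) that the underlying mechanism is a perturbation argument: write $\bX^{\star} = \bV\bV^T$ with $\bV \in \bbR^{N \times r}$ of full column rank, and consider perturbations $\bX(t) = \bV(\bI_r + t\bZ)\bV^T$ for symmetric $\bZ \in \bbR^{r \times r}$; these preserve the PSD cone locally and live in a space of dimension $r(r+1)/2$, while tight affine constraints and the objective cut out a codimension-$(M+1)$ subspace; whenever the perturbation space dominates, one can move along a direction that preserves optimality and drives an eigenvalue of $\bI_r + t\bZ$ to zero, contradicting minimality of the rank. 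The precise accounting in Huang--Palomar yields the $r^2 \le M$ bound.

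The step I expect to require the most care is not any computation but rather verifying that (P2-SDR) is indeed in the form the cited theorem requires. In particular I would check (i) that reducing to real variables via $\hat{\bi} = \mathbf{0}$, done earlier in deriving (P2), leaves the constraint and objective matrices symmetric (so the $\bbR^{N \times N}$ PSD cone is the correct ambient cone and the $r(r+1)/2$ dimension count applies), (ii) that each constraint in \eqref{eq:const1P2SDP} is a single scalar affine constraint rather than a matrix inequality, and (iii) that the technical hypothesis of the Huang--Palomar theorem (solvability, which as noted holds here) is satisfied. Once these points are settled, the stated bound $\rank(\bX^{\star}) \leq \sqrt{Q}$ follows immediately.
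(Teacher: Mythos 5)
Your proposal is correct and follows essentially the same route as the paper, which likewise obtains the bound by directly invoking \cite[Thm.~3.2]{HuangPalomar10} for a separable SDP with one PSD block and $M=Q$ affine constraints; your additional verification of solvability and of the standard form is a welcome elaboration rather than a departure. The only minor wrinkle is in your intuition paragraph: for the \emph{real} symmetric cone the perturbation-space dimension is $r(r+1)/2$, which would give $r(r+1)/2\le Q$ rather than $r^2\le Q$, but since the paper itself applies the complex-case statement of the theorem to reach $\sqrt{Q}$, your conclusion matches the paper's.
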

From Lemma \ref{sec:rank_wo_peak_cons}, we have the following corollary.
\begin{mycor}\label{cor:solution_specialP2_Q123}
For $Q \leq 3$, the SDR in $(\mathrm{P2}\mathrm{-SDR})$ is tight, i.e., the optimal solution $\bX^{\star}$ to $(\mathrm{P2}\mathrm{-SDR})$ is always rank-one, which is given by $\bX^{\star} = {\bar{\bi}}^{\star} \left({\bar{\bi}}^{\star}\right)^T$. The optimal solution to $(\mathrm{P2})$ is thus ${\bar{\bi}}^{\star}$.
\end{mycor}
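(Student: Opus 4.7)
The plan is to derive Corollary 1 directly from Proposition 1 by a simple rank-counting argument, then verify that a rank-one SDR solution translates back to a feasible (and hence optimal) solution of the original non-convex problem $(\mathrm{P2})$.

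First, I would apply Proposition 1, which gives the bound $\rank(\bX^{\star}) \leq \sqrt{Q}$ on any optimal solution of $(\mathrm{P2}\mathrm{-SDR})$. For $Q \in \{1,2,3\}$ we have $\sqrt{Q} \leq \sqrt{3} < 2$, so since the rank is a non-negative integer, $\rank(\bX^{\star}) \in \{0,1\}$. The rank-zero case corresponds to $\bX^{\star} = \mathbf{0}$, which would imply $\Tr(\bM_q \bX^{\star}) = 0$ for all $q$; this violates the constraint \eqref{eq:const1P2SDP} as long as $P > 0$ and $\alpha_q > 0$ for some $q$ (the only interesting case, since $P = 0$ admits the trivial zero solution that is already rank-one). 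Hence $\rank(\bX^{\star}) = 1$.

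Next I would extract a primal variable. Since $\bX^{\star} \succcurlyeq 0$ and has rank one, it admits a unique (up to sign) decomposition $\bX^{\star} = \bar{\bi}^{\star}(\bar{\bi}^{\star})^T$ with $\bar{\bi}^{\star} \in \bbR^N$. Because $(\mathrm{P2}\mathrm{-SDR})$ is precisely the relaxation of $(\mathrm{P2})$ obtained by dropping the rank-one constraint on $\bX$, any rank-one feasible point of the SDR gives, through this decomposition, a feasible point of $(\mathrm{P2})$ with exactly the same objective value $\tfrac{1}{2}\Tr(\overline{\bB}\bX^{\star}) = \tfrac{1}{2}(\bar{\bi}^{\star})^T \overline{\bB}\,\bar{\bi}^{\star}$.

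Finally, I would close the loop with the standard relaxation argument: the optimal value of $(\mathrm{P2}\mathrm{-SDR})$ is a lower bound on that of $(\mathrm{P2})$; since $\bar{\bi}^{\star}$ is feasible for $(\mathrm{P2})$ and attains this lower bound, it must be optimal for $(\mathrm{P2})$, and simultaneously $\bX^{\star}$ is optimal for $(\mathrm{P2}\mathrm{-SDR})$ with rank one, so the relaxation is tight. There is essentially no obstacle here beyond invoking Proposition 1 correctly and handling the degenerate $P=0$ case; the substance of the argument has already been absorbed into the rank bound \eqref{eq:bound_wo_peak_cons}.
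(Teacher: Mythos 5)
Your proof is correct and follows essentially the same route as the paper, which presents the corollary as an immediate consequence of the rank bound $\rank(\bX^{\star}) \leq \sqrt{Q} < 2$ for $Q \leq 3$ from Proposition~\ref{sec:rank_wo_peak_cons}. The extra details you supply (ruling out the rank-zero case and the standard relaxation-tightness argument) are routine and consistent with what the paper leaves implicit.
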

Note that for $Q \geq 4$, the optimal solution of $\bX^{\star}$ to $(\mathrm{P2}\mathrm{-SDR})$ may have a rank higher than 1, which is thus not feasible to $(\mathrm{P2})$. In general, $(\mathrm{P2}\mathrm{-SDR})$ can be efficiently solved by existing software such as CVX~\cite{CVXTool}.


In the following, we propose a time-sharing (TS) based scheme to achieve the same optimal value of problem $(\mathrm{P2}\mathrm{-SDR})$. Let $L$ be the rank of the obtained solution $\bX^{\star}$ for $(\mathrm{P2}\mathrm{-SDR})$, i.e., $L=\rank  \left( \bX^{\star} \right)$, with $L \leq N$. Denote the singular-value-decomposition (SVD) of $\bX^{\star}$ by $\bX^{\star}=\bV \bLambda \bV^H$, where $\bV=[\bv_1 \; \ldots \; \bv_L]$ is an $N \times L$ matrix with $\bV^H \bV=\bI_L$ and $\bLambda \triangleq \diag \{\lambda_1,\ldots,  \lambda_L\}$ is an $L$-order diagonal matrix with the diagonal elements given by $\lambda_1 \geq  \lambda_2 \ge \ldots  \lambda_L >0$.


\textcolor{black}{To perform magnetic beamforming in a TS manner, we divide WPT  into $L$ orthogonal time slots, indexed by $l \in \{1,\ldots,L\}$, where  slot $l$ takes a portion of the total transmission time given by $\tau_l$, with $0<\tau_l<1$ and $\sum_{l=1}^L \tau_l=1$}. In particular, we set
\begin{align}
\tau_l &= \frac{\lambda_l}{\sum_{k=1}^L \lambda_k}.\label{eq:time_TS_woPeak}
\end{align}
In the $l$-th slot, the TX current vector is then given by
\begin{align}
{\bar{\bi}_l}^{\star} &= \sqrt{\sum \limits_{k=1}^L \lambda_k} \; \bv_l. \label{eq:current_TS_woPeak}
\end{align}
We have the following result on the TS scheme.
\begin{mypro} \label{sec:TSopt_wo_peak_cons}
For the case without peak voltage and current constraints, the TS scheme given in~\eqref{eq:time_TS_woPeak} and~\eqref{eq:current_TS_woPeak} achieves the same optimal value of $(\mathrm{P2}\mathrm{-SDR})$.
\end{mypro}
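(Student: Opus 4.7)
The plan is to verify two things for the time-sharing scheme: (i) the time-averaged TX power equals the optimum of $(\mathrm{P2}\text{-SDR})$, and (ii) the time-averaged delivered power to each RX $q$ still satisfies its constraint $\alpha_q P$. Both will follow from a single algebraic identity, namely that the time-weighted covariance of the currents reconstructs the SDR optimum $\bX^{\star}$. Concretely, the plan is to compute
\[
\sum_{l=1}^{L} \tau_l \, \bar{\bi}_l^{\star}\bigl(\bar{\bi}_l^{\star}\bigr)^{T}
\]
and show this equals $\bX^{\star}$, so that the linearity of both the objective $\tfrac{1}{2}\Tr(\overline{\bB}\cdot)$ and the constraints $\Tr(\bM_q\cdot)$ immediately transfers the guarantees from the SDR solution to the time-shared implementation.

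First, I would note that because $\overline{\bB}$ and $\bM_q$ are real symmetric and the problem was already reduced to $\bbR^N$ in the derivation of $(\mathrm{P2})$, the SDR solution $\bX^{\star}\in\bbR^{N\times N}$ can be eigendecomposed with real orthonormal eigenvectors $\bv_l$. Then using $\tau_l=\lambda_l/\sum_{k}\lambda_k$ and $\bar{\bi}_l^{\star}=\sqrt{\sum_{k}\lambda_k}\,\bv_l$, the identity
\begin{align*}
\sum_{l=1}^{L}\tau_l\,\bar{\bi}_l^{\star}\bigl(\bar{\bi}_l^{\star}\bigr)^{T}
&=\sum_{l=1}^{L}\frac{\lambda_l}{\sum_{k}\lambda_k}\,\Bigl(\sum_{k}\lambda_k\Bigr)\,\bv_l\bv_l^{T}\\
&=\sum_{l=1}^{L}\lambda_l\,\bv_l\bv_l^{T}=\bV\bLambda\bV^{T}=\bX^{\star}
\end{align*}
drops out directly from the SVD. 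This is the heart of the argument.

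Next, I would apply this identity to both sides. For the objective, the time-averaged TX power is
\[
\sum_{l=1}^{L}\tau_l\cdot\tfrac{1}{2}\bigl(\bar{\bi}_l^{\star}\bigr)^{T}\overline{\bB}\,\bar{\bi}_l^{\star}
=\tfrac{1}{2}\Tr\!\Bigl(\overline{\bB}\sum_{l=1}^{L}\tau_l\,\bar{\bi}_l^{\star}\bigl(\bar{\bi}_l^{\star}\bigr)^{T}\Bigr)
=\tfrac{1}{2}\Tr\bigl(\overline{\bB}\,\bX^{\star}\bigr),
\]
which matches the optimal value of $(\mathrm{P2}\text{-SDR})$. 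For each RX $q$, the same swap of sum and trace gives the time-averaged delivered power
\[
\sum_{l=1}^{L}\tau_l\cdot\tfrac{w^{2}}{2r_{\textrm{rx},q}}\bigl(\bar{\bi}_l^{\star}\bigr)^{T}\bM_q\bar{\bi}_l^{\star}
=\tfrac{w^{2}}{2r_{\textrm{rx},q}}\Tr\bigl(\bM_q\bX^{\star}\bigr)\ge\alpha_q P,
\]
by feasibility of $\bX^{\star}$ for $(\mathrm{P2}\text{-SDR})$. Hence the TS scheme is feasible for $(\mathrm{P2})$ (interpreted in the time-shared sense) and attains the SDR optimum.

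The step that requires the most care is not the algebra but the conceptual framing: to justify that for $(\mathrm{P2})$ without peak constraints, time-sharing among $L$ rank-one current vectors is a legitimate implementation whose instantaneous TX power averages correctly, and that all the relevant quantities (TX drawn power and RX delivered power) are linear in the rank-one outer product $\bar{\bi}\bar{\bi}^{T}$ so that linear averaging in time is equivalent to a convex combination in the matrix variable. Once this is granted, the proposition is essentially a one-line consequence of the SVD identity above. I expect no serious technical obstacle, and the argument also neatly explains why the gap between $(\mathrm{P2})$ and its SDR disappears when time-sharing is allowed, complementing Corollary~\ref{cor:solution_specialP2_Q123}.
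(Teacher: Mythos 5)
Your proof is correct and follows essentially the same route as the paper: both arguments reduce to the identity $\tau_l\,\bar{\bi}_l^{\star}(\bar{\bi}_l^{\star})^{T}=\lambda_l\,\bv_l\bv_l^{T}$, summed over $l$ to recover $\bX^{\star}$, and then use linearity of the trace to transfer the objective value and the RX power constraints from the SDR solution to the time-shared scheme. Your presentation, which isolates the reconstruction $\sum_l\tau_l\,\bar{\bi}_l^{\star}(\bar{\bi}_l^{\star})^{T}=\bX^{\star}$ as a single identity before applying it, is just a slightly cleaner packaging of the same computation the paper performs term by term.
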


\begin{proof}
With the TS scheme, the total delivered power to each RX $q$ over $L$ time slots is
\begin{align}
  \sum \limits_{l=1}^L \Tr \left( \bM_q {\bar{\bi}_l}^{\star} \left( {\bar{\bi}_l}^{\star} \right)^H \right) \tau_l &=   \sum \limits_{l=1}^L \Tr \left( \bM_q \bv_l \bv_l^H \right) \lambda_l \nonumber \\
  &= \Tr \left( \bM_q \bX^{\star}\right),
\end{align}
and the total transmit power is given by
\begin{align}
  \frac{1}{2} \sum \limits_{l=1}^L \Tr \left( {\overline{\bB}} {\bar{\bi}_l}^{\star} \left( {\bar{\bi}_l}^{\star} \right)^H \right) \tau_l  &=   \sum \limits_{l=1}^L \Tr \left( {\overline{\bB}} \bv_l \bv_l^H \right) \lambda_l \nonumber \\
  &= \frac{1}{2} \Tr \left( {\overline{\bB}} \bX^{\star}\right).
\end{align}
Clearly, by using the above TS scheme, the delivered power and the total transmit power are the same as those by using the solution $\bX^{\star}$ to $(\mathrm{P2}\mathrm{-SDR})$. Hence, the proof is completed.
\end{proof}

In general, since the optimal value of $(\mathrm{P2}\mathrm{-SDR})$ is a lower bound of that of $(\mathrm{P2})$, the above TS scheme thus achieves a TX sum-power that is no larger than the the optimal value of $(\mathrm{P2})$. Thus, the resulting solution can be considered to be optimal for (P2) if TS is allowed. Notice that in such cases, TS is required to achieve the boundary point of the multi-user power region with the given power profile vector $\bm{\alpha}$. \textcolor{black}{In summary, the aforementioned procedure to solve $\mathrm{(P2)}$ is given in Algorithm~\ref{AlgorithmP2}.}
\begin{algorithm}[t!] \small
\caption{: Algorithm for $\mathrm{(P2)}$ with TS} \label{AlgorithmP2}
\begin{algorithmic}[1]
\STATE Input parameters: ${\overline{\bB}}, w, P, \bM_q, r_{\textrm{rx}, q}, \alpha_q$, for $q=1,\ldots,\; Q.$ \\
		\STATE Solve $(\mathrm{P2}\mathrm{-SDR})$, obtain its solution as $\bX^{\star}$.
        \IF{$\rank{ \left( \bX^{\star} \right)}=1$} 
        \RETURN ${\bar{\bi}}^{\star} = \sqrt{\lambda_1} \bv_1$. (TS is not applied)
        \ELSE
        \RETURN ${\bar{\bi}_l}^{\star} = \sqrt{\sum \nolimits_{k=1}^L \lambda_k} \bv_l$, and $\tau_l = \frac{\lambda_l}{\sum \nolimits_{k=1}^L \lambda_k}$, for $l=1, 2,\ldots,\;L$. (TS is applied)
        \ENDIF
\end{algorithmic}
\end{algorithm}
\subsection{Solution to ($P1$) with All Constraints} \label{sec:solution_w_all_cons}
In this subsection, we consider ($\mathrm{P1}$) with all the constraints. Denote the space of $N$-order complex matrices by $\bbC^{N \times N}$. Let $\bX = \bi \bi^H$. The SDR of $\mathrm{(P1)}$ is given by
\begin{subequations}\label{eq:optimP1-SDR}
\begin{align}
(\mathrm{P1}\mathrm{-SDR}): \ \  \underset{ \bX \in \bbC^{N \times N}}{\text{min}} \ \ &\frac{1}{2}   \Tr \left( {\overline{\bB}}  \bX \right) \label{eq:rewardP1SDP} \\
\text{s.t.} \ \ &\Tr \left( {\bM_q} \bX \right) \geq \frac{2 r_{\textrm{rx}, q} \alpha_q P}{w^2},          \nonumber \\
         &\qquad \quad  q=1, 2, \ldots, Q  \label{eq:const1P1SDP} \\
&\Tr \left( {\bB_n} \bX \right) \leq V_{ n}^2, \nonumber \\
&\qquad \quad  n=1,\ldots, N \label{eq:const2P1SDP} \\
&\Tr \left( \bW_n \bX \right) \leq A_{ n}^2, \nonumber \\
&\qquad \quad  n=1,\ldots, N \label{eq:const3P1SDP} \\
&\ \bX \succcurlyeq 0.  \label{eq:const4P1SDP}
\end{align}
\end{subequations}

Like $(\mathrm{P2}\mathrm{-SDR})$, $(\mathrm{P1}\mathrm{-SDR})$ is also convex. By exploiting its structure, we obtain the following result on the rank of the optimal solution to $(\mathrm{P1}\mathrm{-SDR})$.
\begin{mythe}\label{the:rank-bound}
The rank of the optimal solution $\bX^{\star \star}$ to $(\mathrm{P1}\mathrm{-SDR})$ is upper-bounded by
\begin{align}
  \rank  \left( \bX^{\star  \star} \right) \leq \min \left(Q, \sqrt{Q+2N}\right). \label{eq:rank_bound}
\end{align}
\end{mythe}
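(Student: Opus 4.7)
The plan is to prove the two upper bounds $Q$ and $\sqrt{Q+2N}$ separately, and then take their minimum. The bound $\sqrt{Q+2N}$ is a direct application of the classical Pataki/Barvinok rank result for complex SDPs: any extreme-point optimum $\bX$ of a complex SDP with $m$ Hermitian linear inequality/equality constraints satisfies $\rank(\bX)^2 \le m$. Counting the constraints of $(\mathrm{P1}\mathrm{-SDR})$---$Q$ from \eqref{eq:const1P1SDP}, $N$ from \eqref{eq:const2P1SDP}, $N$ from \eqref{eq:const3P1SDP}---yields $m = Q + 2N$ and hence $\rank(\bX^{\star\star}) \le \sqrt{Q+2N}$. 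For this step I would need only to verify that the feasible set is compact and nonempty so that a rank-minimal optimum exists, which follows from the peak constraints together with a Slater-type condition on the load-power targets.

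The bound $\rank(\bX^{\star\star}) \le Q$ would be obtained via a KKT analysis exploiting the specific structure of the cost matrix $\overline{\bB}$. Introducing multipliers $\lambda_q \ge 0$, $\mu_n \ge 0$, $\nu_n \ge 0$ for the three sets of scalar constraints and a PSD dual variable $\bZ \succeq 0$ for \eqref{eq:const4P1SDP}, the stationarity condition gives
\[
\bZ = \tfrac{1}{2}\overline{\bB} - \sum_{q=1}^Q \lambda_q \bM_q + \sum_{n=1}^N \mu_n \bB_n + \sum_{n=1}^N \nu_n \bW_n.
\]
Substituting $\overline{\bB} = \bR + w^2 \sum_q \bM_q / r_{\textrm{rx},q}$ from \eqref{eq:barB} rewrites this as $\bZ = \bY + \sum_q \xi_q \bM_q$, where $\bY \triangleq \tfrac{1}{2}\bR + \sum_n \mu_n \bB_n + \sum_n \nu_n \bW_n \succ 0$ (since $\bR$ is a positive diagonal and the remaining summands are PSD) and $\xi_q \triangleq w^2/(2 r_{\textrm{rx},q}) - \lambda_q$. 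Let $\calN \triangleq \{q : \xi_q < 0\}$, so $|\calN| \le Q$, and set $V \triangleq \spn\{\boldm_q : q \in \calN\}^\perp$, which has $\dim V \ge N - |\calN|$. For any nonzero $\bu \in V$, $\bu^H \bM_q \bu = |\bu^H \boldm_q|^2 = 0$ for every $q \in \calN$, so $\bu^H \bZ \bu \ge \bu^H \bY \bu > 0$ and hence $V \cap \mathrm{null}(\bZ) = \{\bzero\}$. Dimension counting then gives $\rank(\bZ) \ge N - Q$. Invoking the complementary slackness relation $\bZ \bX^{\star\star} = \bzero$, which forces $\mathrm{range}(\bX^{\star\star}) \subseteq \mathrm{null}(\bZ)$, finally yields $\rank(\bX^{\star\star}) \le N - \rank(\bZ) \le Q$.

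The main obstacle I anticipate is the careful bookkeeping in the KKT step: identifying that only the $\bM_q$ summands (with coefficients of unrestricted sign) can degrade the rank of $\bZ$, whereas the $\bB_n$ and $\bW_n$ summands enter with nonnegative coefficients and are themselves PSD; verifying that $\bY$ stays positive definite for every dual-feasible multiplier choice; and confirming that strong duality holds so that complementary slackness is available. Once these are in place, combining the two inequalities immediately gives the stated bound $\rank(\bX^{\star\star}) \le \min(Q, \sqrt{Q+2N})$.
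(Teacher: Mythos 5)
Your proposal is correct and follows the same two-pronged strategy as the paper: the $\sqrt{Q+2N}$ bound comes from the Pataki-type rank result for an SDP with $Q+2N$ constraints (the paper cites exactly this fact from Huang--Palomar), and the $Q$ bound comes from KKT stationarity combined with complementary slackness. The only real difference is in how the $Q$ bound is extracted from the KKT system. The paper multiplies the stationarity condition by $\bX^{\star\star}$, observes that $\tfrac{1}{2}\overline{\bB}+\sum_{n}\rho_n^{\star}\bB_n+\sum_{n}\mu_n^{\star}\bW_n$ is nonsingular, and bounds the rank of $\sum_{q}\lambda_q^{\star}\bM_q\bX^{\star\star}$ by $Q$; you instead bound the nullity of the dual slack matrix directly, by noting that only the $Q$ rank-one terms $\bM_q$ can enter the dual slack with a negative coefficient and that vectors orthogonal to the corresponding $\boldm_q$'s cannot lie in its null space. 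The two finishes are equivalent, and yours is arguably tidier on one point: the paper's justification ``since $\overline{\bB}$ is PSD, the matrix must have full rank'' is not sufficient as literally stated, whereas your decomposition isolates the positive definite part $\tfrac{1}{2}\bR$ explicitly, which is precisely what makes that step valid (via $\overline{\bB}=\bR+w^2\sum_{q}\bM_q/r_{\textrm{rx},q}\succ 0$). The housekeeping items you flag (Slater's condition for strong duality, existence of an extreme-point optimum for the Pataki bound) are handled the same way in the paper.
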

\begin{proof}
Please refer to Appendix~\ref{sec:proof_theorem1}.
\end{proof}
The optimal solution $\bX^{\star  \star}$ to $(\mathrm{P1}\mathrm{-SDR})$ can be efficiently obtained by CVX~\cite{CVXTool}. Moreover, from Theorem~\ref{the:rank-bound}, we directly obtain the following corollary.
\begin{mycor}\label{the:rank-one_Q1}
  For $(\mathrm{P1})$ in the case of $Q=1$, the SDR in $(\mathrm{P1}\mathrm{-SDR})$ is tight, i.e., the optimal solution $\bX^{\star \star}$ to $(\mathrm{P1}\mathrm{-SDR})$ is always of rank-one with $\bX^{\star \star} = \bi^{\star \star } \left(\bi^{\star \star}\right)^H$, where $\bi^{\star \star}$ is thus the optimal solution to $(\mathrm{P1})$.
\end{mycor}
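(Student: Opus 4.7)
The plan is to derive Corollary \ref{the:rank-one_Q1} as an immediate specialization of Theorem \ref{the:rank-bound}. Specifically, I would substitute $Q=1$ into the rank bound \eqref{eq:rank_bound}, which yields $\rank(\bX^{\star\star}) \leq \min(1, \sqrt{1+2N}) = 1$. To rule out the degenerate possibility that this rank is zero, I would invoke the load-power constraint \eqref{eq:const1P1SDP}: for any sum-power target $P>0$ passed in by Algorithm \ref{Algorithm}, the feasibility of $\bX^{\star\star}$ requires $\Tr(\bM_1 \bX^{\star\star}) \geq 2 r_{\textrm{rx},1}\alpha_1 P / w^2 > 0$, so $\bX^{\star\star} = \bzero$ is infeasible and the rank is exactly one.

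With $\bX^{\star\star} \succcurlyeq 0$ established to be rank-one, I would next perform the standard PSD factorization $\bX^{\star\star} = \bi^{\star\star}(\bi^{\star\star})^H$ for some $\bi^{\star\star} \in \bbC^N$. Under this factorization, every trace $\Tr(\bA \bX^{\star\star})$ appearing in the constraints and objective of $(\mathrm{P1}\mathrm{-SDR})$ collapses to the quadratic form $(\bi^{\star\star})^H \bA\, \bi^{\star\star}$, so the vector $\bi^{\star\star}$ automatically satisfies \eqref{eq:const1P1}--\eqref{eq:const3P1} and attains the same objective value $\frac{1}{2}(\bi^{\star\star})^H \overline{\bB}\, \bi^{\star\star}$. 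Since $(\mathrm{P1}\mathrm{-SDR})$ is a relaxation of $(\mathrm{P1})$, its optimal value lower-bounds that of $(\mathrm{P1})$; exhibiting a feasible $\bi^{\star\star}$ that matches this lower bound simultaneously certifies tightness of the SDR and optimality of $\bi^{\star\star}$ for $(\mathrm{P1})$. The only subtlety is the rank-zero exclusion noted above; beyond that, the corollary is a one-line reading of Theorem \ref{the:rank-bound} followed by the standard rank-one synthesis, so I do not anticipate any genuine obstacle.
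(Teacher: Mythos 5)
Your proof is correct and takes essentially the same route as the paper, which obtains the corollary directly from Theorem~\ref{the:rank-bound} by setting $Q=1$ in the bound \eqref{eq:rank_bound}. The details you supply --- excluding the rank-zero case via the active load-power constraint \eqref{eq:const1P1SDP} and the standard rank-one factorization recovering $\bi^{\star\star}$ --- are precisely the steps the paper leaves implicit.
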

For the general case of $Q>1$, if the solution $\bX^{\star \star}$ to $(\mathrm{P1}\mathrm{-SDR})$ is of rank-one with $\bX^{\star \star} = \bi^{\star \star} \left(\bi^{\star \star}\right)^H$, then $\bi^{\star \star}$ is the optimal solution to $(\mathrm{P1})$; however, for the case of $\rank\left( \bX^{\star \star}\right) > 1$, in the following we propose  two approximate solutions for $(\mathrm{P1})$ based on TS and randomization, respectively.
\subsubsection{TS-based Solution}
We note that the TS scheme proposed in Section~\ref{sec:MU_woPeak} for $(\mathrm{P2})$ cannot be directly applied to $(\mathrm{P1})$ due to the additional peak voltage and current constraints. This is because the current solutions given in \eqref{eq:current_TS_woPeak} in general may not satisfy these peak constraints at all TXs over all the $L$ time slots. To tackle this problem, we treat the time allocation $\tau_l$'s and the current scaling factors, denoted by $\sqrt{\theta_l}$ with $\theta_l \geq 0,\ \forall l=1,\ldots,L$, for all slots as design variables, such that all peak constraints can be satisfied over all slots. Recall $\tau_l$'s are subject to $\sum_{t=1}^L \tau_l=1$, and $\tau_l \geq 0, \ \forall l$; and with a little abuse of notations, we still use $\bv_l$'s to denote the singular vectors obtained from the SVD of the optimal solution $\bX^{\star \star}$ to $(\mathrm{P1}\mathrm{-SDR})$, similar to those defined for $\bX^{\star \star}$ to $(\mathrm{P2}\mathrm{-SDR})$. In the $l$-th slot, the TX current vector is then set as
\begin{align}
{\bar{\bi}_l} &= \sqrt{\theta_l} \bv_l. \label{eq:current_TS_wPeak}
\end{align}

Let $\bm{\theta}=[\theta_1 \; \ldots \ \theta_L]^T$, and $\bm{\tau}=[\tau_1 \; \ldots \ \tau_L]^T$. Moreover, we denote $\bV_l=\bv_l \bv_l^H$, and nonnegative constants $c_{0, l}= \Tr \left( {\overline{\bB}}  \bV_l \right)$, $c_{1, lq} = \Tr \left( {\bM_q} \bV_l \right)$, $c_{2, ln} = \Tr \left( {\bB_n} \bV_l \right)$, and $c_{3, ln} = \Tr \left( \bW_n \bV_l \right)$. We then formulate the following problem to obtain the TS-based solution for $(\mathrm{P1})$.
\begin{subequations}\label{eq:optimP1TS}
\begin{align}
(\mathrm{P1}\mathrm{-TS}): \ \ \underset{\bm{\theta}, \; \bm{\tau}}{\text{min}} \ \
&\sum \limits_{l=1}^L \frac{c_{0, l} \theta_l \tau_l}{2} \label{eq:rewardP1TS} \\
\text{s.t.} \ \
&\sum \limits_{l=1}^L c_{1, lq} \theta_l \tau_l \geq \frac{2 r_{\textrm{rx}, q} \alpha_q P}{w^2},  \nonumber \\
&\qquad  q=1,\ldots, Q \label{eq:const1P1TS} \\
&c_{2, ln} \theta_l \leq V_{ n}^2,  \nonumber \\
& \quad   n=1,\ldots, N, \;~ l=1,\ldots, L \label{eq:const2P1TS} \\
&c_{3, ln} \theta_l \leq A_{ n}^2,  \nonumber \\
& \quad  n=1,\ldots, N, \;~ l=1,\ldots, L  \\
&\sum_{t=1}^L \tau_l=1, \label{eq:const4P1TS} \\
&\tau_l \geq 0, \; \theta_l \geq 0, \;~ l=1,\ldots, L. \label{eq:const3P1TS}
\end{align}
\end{subequations}

We define a set of new variables as $\phi_l=\theta_l \tau_l, \ l=1,\ldots, L$. Problem $(\mathrm{P1}\mathrm{-TS})$ is thus rewritten as the following linear-programming (LP), which can be efficiently solved by e.g., CVX~\cite{CVXTool}.
\begin{subequations}\label{eq:optimP1TSLP}
\begin{align}
(\mathrm{P1}\mathrm{-TS-LP}): \ \ &\underset{\bm{\lambda}, \; \bm{\tau}}{\text{min}} \ \
\sum \limits_{l=1}^L \frac{c_{0,  l} \phi_l}{2} \label{eq:rewardP1TSLP} \\
\quad \text{s.t.} \ \
&\sum \limits_{l=1}^L c_{1, lq} \phi_l \geq \frac{2 r_{\textrm{rx}, q} \alpha_q P}{w^2}, \nonumber \\
&\quad    q=1,\ldots,  Q \label{eq:const1P1TSLP} \\
&c_{2, ln} \phi_l - V_{ n}^2 \tau_l \leq 0, \nonumber \\
&\quad    n=1,\ldots, N, \;~ l=1, \ldots, L \label{eq:const2P1TSLP} \\
&c_{3, ln} \phi_l - A_{ n}^2 \tau_l \leq 0, \nonumber \\
&\quad    n=1,\ldots, N, \;~ l=1, \ldots, L \label{eq:const3P1TSLP} \\
&\sum_{t=1}^L \tau_l=1, \label{eq:const4P1TSLP} \\
&\tau_l \geq 0, \; \phi_l \geq 0, \;~ l=1,\ldots, L. \label{eq:const3P1TSLP}
\end{align}
\end{subequations}
If the above $(\mathrm{P1}\mathrm{-TS-LP})$ is feasible, there is a feasible TS-based solution for $(\mathrm{P1})$; otherwise $(\mathrm{P1})$ is regarded as infeasible, which implies that the RX sum-power $P$ needs to be decreased in the next bisection search iteration in Algorithm~\ref{Algorithm}.
\subsubsection{Randomization-based Solution}
The randomization technique is a well-known method applied to extract a feasible approximate QCQP solution from its SDR solution. Before presenting the proposed randomization-based solution, we first describe the steps for generating feasible random vectors from SDR solution. Recall the SVD of $\bX^{\star \star}$ as $\bX^{\star \star}=\bV \bLambda \bV^H$. Define $\bLambda^{\frac{1}{2}} \triangleq \diag \{\sqrt{\lambda_1}, \ldots,  \sqrt{\lambda_L}\}$. A random vector is \textcolor{black}{specifically} generated as follows:
\begin{align}
\by_d =\bV \bLambda^{\frac{1}{2}} \bw_d,
\end{align}
where $\bw_d \sim \calC \calN(\bold{0}_N, \bI_N)$, with $\bold{0}_N$ representing an all-zero column vector of length $N$.

To further generate a random vector $\bx_d$ that is feasible to $\mathrm{(P1)}$, we scale the vector $\by_d$ by $\mu_d$ with $\mu_d \in \bbR$, i.e., $\bx_d \triangleq \mu_d \by_d$. If the resulting problem shown as follows is feasible, a feasible $\mu_d$ is thus found; otherwise no feasible vector can be obtained from this $\by_d$.
\begin{subequations}\label{eq:optimP1_rand_feasibility}
\begin{align}
\mathrm{find}: \ \ &\mu_d  \label{eq:fea} \\
\text{s.t.} \ \
&\frac{w^2 \mu_d^2}{2 r_q}  {\by_d}^H \bM_q {\by_d} \geq \alpha_q P, \;~ q=1,\ldots, Q \label{eq:const1Pfea} \\
&\mu_d^2 \by_d^H \bB_n \by_d \leq V_{ n}^2, \;~   n=1,\ldots, N \label{eq:const2Pfea} \\
&\mu_d^2 \by_d^H \bW_n \by_d \leq A_{ n}^2, \;~ n=1,\ldots, N. \label{eq:const3Pfea}
\end{align}
\end{subequations}
The proposed algorithm for obtaining the randomization-based solution \textcolor{black}{is summarized} as Algorithm \ref{Algorithm_rand}.

\section{Magnetic Channel Estimation}\label{sec: CE}
For implementation of magnetic beamforming in practice, it is necessary for the central controller at the TX side to estimate the mutual inductance between each pair of TX coil and RX coil, namely magnetic MIMO channel estimation.
\textcolor{black}{Note that in this paper,  the mutual inductances $M_{nq}$'s are assumed to be quasi-static, i.e., they remain  constant over a certain block of time, but may change from one block to another, since the RXs are mobile devices in general. Hence, $M_{nq}$'s need to be estimated periodically over time.
For practical implementation, at the beginning of each transmission period, we treat all the magnetic channels $M_{nq}$'s as unknown real parameters.  For convenience, we denote the magnetic channel matrix by $\bM$ with elements given by $M_{nq}$'s.
In the next, we first consider magnetic MIMO channel estimation for the ideal case with perfect RX current knowledge and then the  practical case with imperfect current  knowledge.}

\begin{algorithm}[t!]
\caption{: Randomization-based Solution for $\mathrm{(P1)}$} \label{Algorithm_rand}
\begin{algorithmic}[1]
\STATE Initialization: the solution $\bX^{\star \star}$ to $(\mathrm{P1}\mathrm{-SDR})$, a large positive integer $D$ \textcolor{black}{(set as $D=4\times 10^{3}$ in our simulations)}, set $\calD = \emptyset$. \\
\STATE Compute the SVD of $\bX^{\star \star}$ as $\bX^{\star \star}=\bV \bLambda \bV^H$. \\
\FOR{$d=1,\ldots,D$}
\STATE Generate a random vector $\by_d =\bV \bLambda^{\frac{1}{2}} \bw_d$, where $\bw_d \sim \calC \calN(\bold{0}_N, \bI_N)$.
\IF{the problem~\eqref{eq:optimP1_rand_feasibility} is feasible, }
\STATE Obtain $\bx_d =\mu_d \by_d$.
\STATE $\calD = \calD \bigcup d$.
\ENDIF
\ENDFOR
\RETURN $\bi^{\star \star} = \arg \underset{d \in \calD} {\min} \quad \frac{1} {2} { {{\bx_d}}^H  {\overline{\bB}} {{\bx_d}}}$ if $\calD \neq \emptyset$; otherwise, declare $(\mathrm{P1})$ is infeasible.
\end{algorithmic}
\end{algorithm}

We assume that each RX $q$ can feed back its measured current to the central controller by using existing communication module. One straightforward method to estimate $M_{nq}$ is given in~\cite{RezaZhang16TISPN}, where by switching off all the other TXs and RXs, TX $n$ can estimate $M_{nq}$ with RX $q$ based on the current measured and fed back by RX $q$. However, this method may not be efficient for estimating the magnetic MIMO channel $\bM$, since it requires synchronized on/off operations of all TXs and RXs and also needs at least $NQ$ iterations to estimate all $M_{nq}$'s.
Alternatively, we propose more efficient methods that can simultaneously estimate the magnetic MIMO channel $\bM$ in $T$ ($T \geq Q$) time slots. In the $t$-th slot, we apply a source voltage $v_{\textrm{tx}, n, t}$ on TX $n$, and the current $i_{\textrm{tx}, n, t}$ is measured by TX $n$. From Kirchhoff's circuit laws, the voltage of TX $n$ is
\begin{align}
  v_{\textrm{tx},n, t}&=r_{\textrm{tx}, n} i_{\textrm{tx}, n, t} + \nonumber \\
  &\quad j w \sum \limits_{k=1, \neq n}^N \tilM_{nk} i_{\textrm{tx}, k, t} - j w  \sum \limits_{q=1}^Q M_{nq} i_{\textrm{rx}, q, t}. \label{eq:voltage_CE}
\end{align}

In practice, randomly generated voltage values are assigned over different TXs as well as over different time slots.

Define the $N \times T$ matrices $\bH$ and $\bY$ with elements given by $v_{\textrm{tx}, n, t}$'s and $i_{\textrm{tx}, n, t}$'s, respectively. Moreover, define the $Q \times T$ matrix $\bZ$ with elements given by $i_{\textrm{rx}, q, t}$'s and the $N \times N$ matrix $\bF$ with elements given by
\begin{align}
 F_{nk}  &=
  \left\{ \begin{array}{cl}
  r_{\textrm{tx}, n} , &\mbox{if}\; k=n\\
    j w \tilM_{nk}, \qquad &\mbox{otherwise}. \\
  \end{array}
  \right. \label{eq_F}
\end{align}
Since the fixed TX-TX mutual inductance $\tilM_{nk}$ can be measured offline and the TX currents $i_{\textrm{tx}, n, t}$'s as well as voltages $v_{\textrm{tx}, n, t}$'s can be measured by the TXs, the matrices $\bF$ and $\bY$ are assumed to be known by the central controller perfectly.
From~\eqref{eq:voltage_CE}, the voltages at all TXs over $T$ time slots can be written in the following matrix-form
\begin{align}\label{eq:CEeq0}
  \bH = \bF \bY - j w \bM \bZ.
\end{align}
Let $\bG \triangleq \frac{j}{w} (\bH - \bF \bY)$. The voltage matrix in  \eqref{eq:CEeq0} can be rewritten as
\begin{align}\label{eq:CEeq1}
  \bG = \bM \bZ.
\end{align}
With known $\bH, \ \bF$ and $\bY$, the matrix $\bG$ is known by the central controller.
\subsection{Channel Estimation with Perfect RX-Current Knowledge} \label{sec:MCE_perfectCurrent}
For the case with perfect RX-current knowledge of $\bZ$ at the central controller, it suffices to use $Q$ time slots for channel estimation, i.e., $T=Q$. Since the voltage values are randomly generated and assigned over different TXs as well as over different time slots, the RX current matrix $\bZ$ known at the central controller can be assumed to have a full rank of $Q$ and thus its inverse exists. Hence, the mutual inductance matrix $\bM$ can be estimated as
\begin{align}\label{eq:Mestimate_perfectmeasure}
  \hatbM = \bG \bZ^{-1}. 
\end{align}
Note that from~\eqref{eq_F} and~\eqref{eq:CEeq0}, it can be shown that the estimate in~\eqref{eq:Mestimate_perfectmeasure} is always a real matrix.





\subsection{Channel Estimation with Imperfect RX Current Knowledge} \label{sec:MCE_imperfectCurrent}


In practice, the RX-current information of $\bZ$ obtained by the central controller are not perfect, due to various \textcolor{black}{errors} such as the error in the current meter reading, quantization error and feedback error, etc. Denote the error of the $q$-th RX's current in the $t$-th slot by $e_{\textrm{rx}, q, t}$. We assume that all the current errors $e_{\textrm{rx}, q, t}$'s are mutually independent and each follows the CSCG distribution with zero mean and variance $\sigma^2$. The corresponding RX current known by the central controller is thus $i_{\textrm{rx}, q, t}^{\prime} = i_{\textrm{rx}, q, t} + e_{\textrm{rx}, q, t}$. Denote all the RX-current errors by the $Q \times T$ matrix $\bE$ with elements $e_{\textrm{rx}, q, t}$'s. In addition, denote the RX-current knowledge obtained at the central controller by the $Q \times T$ matrix $\tilbZ$ with elements $i_{\textrm{rx}, q, t}^{\prime}$'s. We thus have
\begin{align}\label{eq:error_model}
  \bZ &= \tilbZ - \bE.
\end{align}
With RX-current errors, from~\eqref{eq:error_model}, the circuit equation in~\eqref{eq:CEeq1} is rewritten as follows:
\begin{align}\label{eq:CEeq}
  \bG = \bM \tilbZ - \bM \bE.
\end{align}


\textcolor{black}{In the following, we first show the difficulty to obtain the maximum likelihood (ML) estimate for the magnetic channel $\bM$, then present a suboptimal but efficiently implementable least-square (LS) \textcolor{black}{based} estimate for $\bM$.}
Define $\bA=\bM \bE$ for convenience. From~\eqref{eq:CEeq}, we have
\begin{align}\label{eq:CEeq2}
   \bA = \bM \tilbZ - \bG.
\end{align}
Denote the columns of $\bA, \ \tilbZ$ and $\bG$ by $\ba_t, \ \tilbz_t$ and $\bg_t$, respectively, for $t=1,\ldots, T$. Then $\ba_t$ is a CSCG random vector with mean $\bm{\mu}_t(\bM)$ and covariance matrix $\bSigma (\bM)$, \textcolor{black}{which are} given by
\begin{align}
  \bm{\mu}_t(\bM) &=  \bM \tilbz_t - \bg_t, \\
  \bSigma (\bM) &= \sigma^2 \bM \bM^T.
\end{align}
From the mutual independence of $\ba_t$'s, the joint probability \textcolor{black}{distribution} of $\ba_t$'s \textcolor{black}{is} given by
\begin{align}
  &p(\bA) = \frac{1}{(2 \pi)^{\frac{NT}{2}} \left|\bSigma (\bM)\right|^{\frac{T}{2}}}  \\
  &\quad \exp \left( -\frac{1}{2} \sum \limits_{t=1}^T (\ba_t - \bm{\mu}_t (\bM))^H (\bSigma (\bM))^{-1} (\ba_t - \bm{\mu}_t (\bM))\right).\nonumber
\end{align}
The log-likelihood function of the above probability density function (PDF) is thus
\begin{align}\label{eq:log_likelihood}
  \log(\calL) &= - T \log (|\bSigma (\bM)|) - N T \log(\pi) - \nonumber\\
  &\sum \limits_{t=1}^T (\ba_t - \bm{\mu}_t (\bM))^H (\bSigma (\bM))^{-1} (\ba_t - \bm{\mu}_t (\bM)).
\end{align}
The ML estimate should be obtained by maximizing the log-likelihood function $\log(\calL)$ in~\eqref{eq:log_likelihood} over $\bM$. To this end, we take the derivative of $\log(\calL)$ with respect to $\bM$ as follows:
\begin{align}\label{eq:der_M}
  &\frac{\partial \log(\calL)}{\partial \bM} =
  - T  \frac{\partial \log(|\bM \bM^T|)}{\partial \bM}-  \\
  &\sum \limits_{t=1}^T \frac{\partial}{\partial \bM} \left[(\ba_t - \bm{\mu}_t (\bM))^H \left( \sigma^2 \bM \bM^T \right)^{-1} (\ba_t - \bm{\mu}_t (\bM)) \right].\nonumber
\end{align}
\textcolor{black}{However, it is difficult to simplify the derivative in~\eqref{eq:der_M} to derive the optimal $\bM$, since the means $\bm{\mu}_t(\bM)$'s depend on the unknown $\bM$ and also vary over $t$, and furthermore the covariance matrix $\bSigma (\bM)$ is a scaled Gramian matrix of $\bM^T$. Hence, we present a suboptimal LS estimate, denoted by $\hatbM_{\sf LS}$, for $\bM$ in the following theorem.}

\begin{mythe}\label{the:LS}
The LS estimate of $\bM$ is given by
\begin{align}\label{eq:LS_est}
  \hatbM_{\textrm{LS}} = \left( \bG \tilbZ^H  + \bG^{\ast} \tilbZ^T \right) \left( \tilbZ \tilbZ^H  + \tilbZ^{\ast} \tilbZ^T \right)^{-1}, 
\end{align}
and the resulting squared error is given by
\begin{align}\label{eq:LSE}
  J = \Tr \left( \left( \bG - \hatbM_{\textrm{LS}} \tilbZ \right) \left( \bG - \hatbM_{\textrm{LS}} \tilbZ \right)^H\right).
\end{align}
\end{mythe}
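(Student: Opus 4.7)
The plan is to derive $\hatbM_{\textrm{LS}}$ by minimizing the least-squares objective $J(\bM) = \|\bG - \bM\tilbZ\|_F^2 = \Tr\bigl((\bG - \bM\tilbZ)(\bG - \bM\tilbZ)^H\bigr)$ over the set of \emph{real} matrices $\bM \in \bbR^{N \times Q}$, which is natural here since the mutual inductances are real (as noted in the paper following \eqref{eq:Mestimate_perfectmeasure}). The formula for $J$ in the theorem statement then follows tautologically once we plug in the optimum, so the whole job is really to derive the expression for $\hatbM_{\textrm{LS}}$.

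First I would expand the objective as
\begin{align*}
J(\bM) &= \Tr(\bG\bG^H) - \Tr(\bG\tilbZ^H \bM^T) - \Tr(\bM\tilbZ \bG^H) + \Tr(\bM\tilbZ\tilbZ^H \bM^T),
\end{align*}
using $\bM^H = \bM^T$. Since $J$ is real and $\bM$ is real, the two cross terms are complex conjugates of each other, so their sum equals $2\,\mathrm{Re}\{\Tr(\bG\tilbZ^H \bM^T)\} = \Tr\bigl((\bG\tilbZ^H + \bG^\ast \tilbZ^T)\bM^T\bigr)$. For the quadratic term, I would decompose $\tilbZ\tilbZ^H$ into its real (symmetric) and imaginary (antisymmetric) parts; since $\bM$ is real, the antisymmetric contribution vanishes inside the trace, giving
\begin{align*}
\Tr(\bM\tilbZ\tilbZ^H \bM^T) = \tfrac{1}{2}\Tr\bigl(\bM(\tilbZ\tilbZ^H + \tilbZ^\ast \tilbZ^T)\bM^T\bigr).
\end{align*}

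Next I would differentiate with respect to the real matrix $\bM$ using the standard identities $\partial_{\bM}\Tr(\bC\bM^T) = \bC$ and $\partial_{\bM}\Tr(\bM\bA\bM^T) = \bM(\bA + \bA^T)$. Setting $\bA = \tilbZ\tilbZ^H + \tilbZ^\ast \tilbZ^T$ (real symmetric) and $\bC = \bG\tilbZ^H + \bG^\ast \tilbZ^T$ (real), the stationarity condition $\partial J/\partial \bM = 0$ collapses to the real normal equation $\bM\bA = \bC$, whose solution is exactly the expression in \eqref{eq:LS_est}. The matrix $\bA$ is positive semidefinite (being a Gramian-type combination), and under the mild assumption that the randomly generated voltage inputs make $\tilbZ$ full row-rank (so $\bA$ is invertible), $J$ is strictly convex in $\bM$ and the stationary point is the unique global minimum. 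Finally, plugging $\hatbM_{\textrm{LS}}$ back into $J$ yields \eqref{eq:LSE} by definition.

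The only subtle step is handling the reality of $\bM$ against the complexity of $\bG$ and $\tilbZ$ — one must resist the temptation to apply the complex-matrix LS formula $\bG\tilbZ^H(\tilbZ\tilbZ^H)^{-1}$, which in general is complex and therefore not a valid estimate of $\bM$. The correct real-variable calculus produces the symmetrized combinations $\bG\tilbZ^H + \bG^\ast\tilbZ^T$ and $\tilbZ\tilbZ^H + \tilbZ^\ast\tilbZ^T$, which are precisely what appear in \eqref{eq:LS_est}. Everything else is routine calculus and trace algebra.
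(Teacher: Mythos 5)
Your proof is correct and follows essentially the same route as the paper's: minimize $J(\hatbM)$ by setting its derivative with respect to the real matrix $\hatbM$ to zero, invoke full row rank of $\tilbZ$ for invertibility of the normal-equation matrix, and read off \eqref{eq:LSE} by substitution. You additionally spell out the trace expansion and the real-versus-complex subtlety that produces the symmetrized terms $\bG \tilbZ^H + \bG^{\ast}\tilbZ^T$ and $\tilbZ\tilbZ^H + \tilbZ^{\ast}\tilbZ^T$, which the paper's Appendix C states as the derivative \eqref{eq:derivative} without derivation.
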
 
\begin{proof}
  Please refer to Appendix~\ref{sec:proof_MCE}.
\end{proof}


It is noted that the LS estimate $\hatbM_{\textrm{LS}}$ in~\eqref{eq:LS_est} is always a real matrix, although both $\bG$ and $\tilbZ$ are complex matrices in general.

 \begin{table*} [t!]
	\centering
	\caption{Mutual/Self inductance values ($\mu$H)} \label{table_inductance_new}
	\footnotesize{
		\textcolor{black}{\begin{tabular}{*{10}{c}}
			\hline \hline
			& TX $1$ & TX $2$ & TX $3$ & TX $4$ & TX $5$ &RX $1$ &RX $2$ &RX $3$ &RX $4$\\
			\hline
			TX $1$   & $47700$  & $2.2970$ & $0.8074$ & $2.2970$  & $6.5741$  &$0.9468$ &$0.04747$ &$0.02789$ &$0.03874$ \\
			TX $2$  & $2.2970$  & $47700$ & $2.2970$  & $0.8074$  & $6.5741$  &$0.01733$ &$0.5642$ &$0.05711$ &$0.01733$ \\
			TX $3$   & $0.8074$  & $2.2970$ & $47700$  & $2.2970$  &$6.5741$ &$0.007872$ &$0.01945$ &$0.09880$ &$0.03874$ \\
			TX $4$   & $2.2970$  & $0.8074$ & $2.2970$  & $47700$  & $6.5741$ &$0.02817$ &$0.01116$ &$0.03825$ &$0.2458$ \\
			TX $5$   & $6.5741$  & $6.5741$ & $6.5741$ & $6.5741$  & $47700$  &$0.07472$ & $0.1526$ &$1.3266$ &$0.5256$ \\
			RX $1$   & $0.9468$ & $0.01733$ &  $0.007872$ & $0.02817$  & $0.07472$  & $280.32$ &$0.0003932$ &$0.0003153$ & $0.0005579$ \\
			RX $2$   & $0.04747$  & $0.5642$ & $0.01945$ & $0.01116$ &$0.1526$ &$0.0003932$ &$280.32$ &$0.001130$ & $0.0003153$ \\
			RX $3$   &  $0.02789$ & $0.05711$ & $0.09880$ & $0.03825$  & $1.3266$  &$0.0003153$ &$0.001130$ &$280.32$ & $0.002561$ \\
			RX $4$   & $0.03874$ & $0.01733$ &$0.03874$ & $0.2458$ & $0.5256$  &$0.0005579$ &$0.0003153$ &$0.002561$ & $280.32$ \\
			\hline
		\end{tabular}
	}}
\end{table*}


\section{Numerical Results}\label{sec: simulation}
\textcolor{black}{In this section, we evaluate the performance of our proposed magnetic channel estimation and magnetic beamforming schemes.
As shown in Fig.~\ref{fig:Fig_app}, we consider a MIMO MRC-WPC system, which constitutes a rectangular table of size $1.6\text{m}\times 1.6\text{m}$ with $N=5$ built-in wireless chargers placed horizontally below its surface and $Q=4$ RXs  placed horizontally on its surface  at \textcolor{black}{random} locations.}
\textcolor{black}{Specifically,  we consider the thickness of the table's surface is $10$cm, which is indeed the same as the vertical separating distance between each TX and RX.
For TXs $1$--$5$, we set $(x=0.7,y=0.7)$, $(x=-0.7,y=0.7)$, $(x=-0.7,y=-0.7)$, $(x=0.7,y=-0.7)$, and $(x=0,y=0)$, respectively, in meter. On the other hand, for RXs $1$--$4$, we set $(x=0.7,y=0.5)$, $(x=-0.3,y=0.6)$, \textcolor{black}{$(x=-0.2,y=-0.1)$}, and $(x=0.3,y=-0.3)$, respectively, in meter.}
\textcolor{black}{Moreover, we consider that each TX coil has $250$ turns and a radius of $10$cm, while each RX coil has $50$ turns and a radius of $2$cm.
We assume that coils are all made from  copper wire with radius of $0.25$mm.
We set the resistance of each TX $n$ as $r_{\textrm{tx}, n}=13.44 \ \Omega, n=1,...,N$, which is set equal  to the ohmic resistance of its coil.}
\textcolor{black}{Similarly, the parasitic resistance of each \textcolor{black}{RX} $q$ is set as $r_{ \textrm{rx,p}, q}=0.5367~\Omega$. We also set the load resistance at \textcolor{black}{RX} $q$ as  $r_{ \textrm{rx,l}, q}=10~\Omega$.  Clearly, the parasitic resistance  of each RX  is negligible compared to the much larger load resistance, which is typical in practice. Hence, in our  simulations, we can safely set $r_{ \textrm{rx,p}, l}/r_{ \textrm{rx},q}\approx 1$, $q=1,\ldots,Q$.}
\textcolor{black}{The compensators' capacitances at the TXs and RXs are  set such that their natural angular frequencies all become identically   $w=42.6 \times 10^6$ rad/second (or  $6.78$ MHz).
Last, we set $P_T =100$~W, and the peak voltage/current constraints  at all TXs are given by $V_n=50\sqrt{2}$~V and $A_{n}=5\sqrt{2}$~A, $n=1,...,N$, respectively.}

\textcolor{black}{The self and mutual inductance values of all TXs and RXs are given in Table \ref{table_inductance_new}. From the mutual inductance values in Table~\ref{table_inductance_new}, we have two observations: first, on average the coupling between RXs are considerably smaller than that between RXs and TXs; second, for each RX the ratio of the mutual inductance between itself and the closest TX  to  that between itself and the closest RX is very large  (e.g., the ratios are $1700$, $500$, $520$, and $205$ for RXs $1$--$4$, respectively).
In this case, the current induced at each RX is mainly due to the magnetic flux generated by its nearby \textcolor{black}{TX(s)}, which is in accordance with our previous  assumption that the mutual inductances between RXs are negligible.
In the following simulations,  we thus ignore the mutual inductance between RXs.}

\subsection{Magnetic MIMO Channel Estimation}\label{sec:MSE_CE}
In this subsection, we evaluate the performance of magnetic channel estimation. For performance comparison, we use the channel estimation scheme in~\cite{RezaZhang16TISPN} as a benchmark, where in each training slot, only one pair of TX $n$ and RX $q$ are switched on, and the mutual inductance between this pair of coils is estimated as
\begin{align}\label{eq:benchmark}
\hatM_{nq} = \textrm{Re} \left\{\frac{r_{\textrm{tx}, n} i_{\textrm{tx}, n}-v_{\textrm{tx}, n}}{jw \tili_{\textrm{rx}, q}} \right\},
\end{align}
with the imperfect RX-current $\tili_{\textrm{rx}, q}=i_{\textrm{rx}, q}+e_{\textrm{rx}, q}$. The real-part operation is adopted, since the RX-current error $e_{\textrm{rx}, q}$ is complex in general. The total required number of slots is thus $NQ=20$ in our numerical example.
For the proposed LS estimation scheme, we assume that the training period of $T$ time slots can be divided into multiple blocks each of which consists of $N$ successive time slots. In the $n$-th time slot of each block, TX $n$ carries a source voltage of $0.75$ V, and other TXs remain in closed-loop but with a source voltage of $0$ V, which ensures that the total power consumed by all TXs in each slot is $40$~W \textcolor{black}{(i.e., less than  $P_T=100$~W)}. We define the signal-to-noise ratio (SNR) of the RX-current estimation as $\gamma = \frac{\bbE \left[ |i_{\textrm{rx}, q, t}|^2 \right]}{\sigma^2}$, where the expectation is with respect to different $q$'s and $t$'s. We define the following normalized mean squared error (MSE) as the performance metric,
\begin{align}
\bar{\varepsilon} &=   \frac{ \bbE_{\hatbM} \left[ \| \bM - \hatbM\|_F^2\right] }{\| \bM \|_F^2}.
\end{align}
The following numerical results are based on $10^6$ Monte Carlo simulations each with randomly generated RX current errors.

Fig.~\ref{fig:Fig0} plots the normalized MSE $\bar{\varepsilon}$ versus the RX-current SNR $\gamma$, for both our proposed LS estimation scheme and the benchmark scheme in~\cite{RezaZhang16TISPN}.
For \textcolor{black}{the proposed} scheme, we observe that in general $\bar{\varepsilon}$ decreases as $\gamma$ increases, as expected. In particular, for $T=10$, we observe that the normalized MSE is $2.8 \times 10^{-3}$, $3 \times 10^{-4}$, and $3 \times 10^{-5}$ for $\gamma=20$ dB, $30$ dB, and $40$ dB, respectively. In practice, the precision of current meters is typically more than $99\%$. Neglecting the quantization error and feedback error, the SNR $\gamma$ can thus be practically modeled as $40$ dB. Since the proposed channel estimation scheme is accurate enough in practice, we thus assume that the RX-current information is known by the central controller perfectly in the subsequent simulations.
Moreover, we observe that the MSE decreases as the number of training slots $T$ increases.
On the other hand, we observe that the LS estimation outperforms the estimation in~\cite{RezaZhang16TISPN} in terms of \textcolor{black}{both} lower MSE and less training time \textcolor{black}{required}. For an MSE level of $10^{-3}$, the LS estimation achieves a SNR improvement of 3 dB and 6 dB for $T=10$ and $20$, respectively. Also, for the scheme in~\cite{RezaZhang16TISPN}, the MSE is high in the low SNR region, due to the real-part rounding operation in~\eqref{eq:benchmark}.
\begin{figure}[t!]
	\centering	\includegraphics[width=.95\columnwidth]{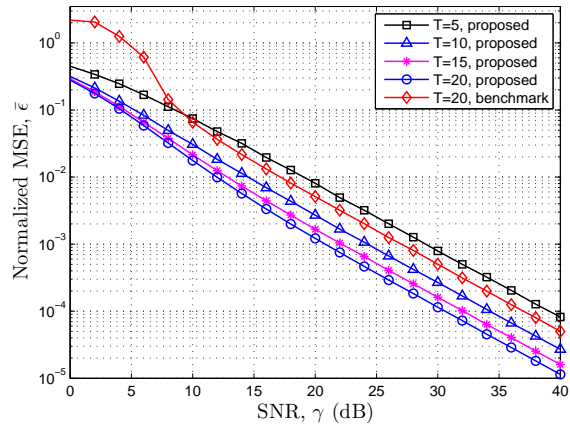}
	\caption{Normalized MSE for TX-RX inductance estimation.}
	\label{fig:Fig0}
\end{figure}

\begin{table*} [t!]
	\centering
	\caption{Comparison under different load power} \label{table_solution}
	\scriptsize{
		\begin{tabular}{*{3}{c}}
			\hline \hline
			& $P=1$~W & $P=56$~W\\
			\hline
			$(i_1^{\star}, v_1^{\star}, p_1^{\star})$    & $(-0.0152,  \ -1.109 -32.027j, \ 0.0085)$ & $(-0.224, \ -52.910 +46.910j,\ 5.9279)$ \\
			$(i_2^{\star}, v_2^{\star}, p_2^{\star})$   & $(-0.181,  \ -13.185 -15.953j, \ 1.194)$ & $(1.269 + 0.786j, \ 68.983 -15.531j, \ 37.661)$ \\
			$(i_3^{\star}, v_3^{\star}, p_3^{\star})$   & $(-0.0062,  \ -0.454 -32.336j, \ 0.0014)$ & $(-0.190 + 0.0036j,  \ -55.667 +43.602j, \ 5.381)$ \\
			$(i_4^{\star}, v_4^{\star}, p_4^{\star})$   & $(-0.0036,  \ -0.260 -22.0638j, \ 0.000467)$ & $(-0.702 - 0.573j, \ -70.073 - 9.468j, \ 27.321)$\\
			$(i_5^{\star}, v_5^{\star}, p_5^{\star})$   & $(-0.0490,  \ -3.565 -57.779j,\ 0.0874)$ & $(-0.0204 + 0.123j, \ -42.861 +56.239j, \ 3.906)$\\
			\hline
		\end{tabular}
	}
\end{table*}

\subsection{MISO WPT with Single RX }\label{sec:simulation_MOSO}
In this subsection, we consider the special case of a single RX,  i.e., only RX $2$ is present in Fig. \ref{fig:Fig_app}.
For performance benchmark, we consider an uncoordinated WPT system with all TXs set to have identical current with equal power consumption. We compare this system with our proposed coordinated WPT with optimal magnetic beamforming without or with the peak voltage and current constraints at all TXs.
\textcolor{black}{We define the efficiency of WPT as the ratio of the delivered load power $P$ to the total TX power $p_{\textrm{tx}}$, i.e., $\eta \triangleq \frac{P}{p_{\textrm{tx}}}$.}

Fig.~\ref{fig:Fig_A_fixedR} plots the total TX power $p_{\textrm{tx}}$ and the efficiency $\eta$ versus the delivered load power $P$. For the case without TX voltage/current constraints, it is observed that the WPT efficiencies with magnetic beamforming and benchmark system are $77.3\%$ and $58.6\%$, respectively.
For the case with TX voltage/current constraints, it is observed that magnetic beamforming can deliver power up to $56$~W to the RX with the efficiency of $70\%$; while the benchmark system can deliver at most $0.2$~W to the RX with the efficiency of $58.6\%$. Thus, besides the WPT efficiency improvement, magnetic beamforming also significantly enhances the maximum power deliverable to the RX load, under the same practical circuit constraints.

Fig.~\ref{fig:Fig_A_fixedR} also shows that the WPT efficiency decreases over $1\le P \le 56$ in W.
To explain this observation and obtain insights for magnetic beamforming, we further investigate the two cases of $P=1$~W and $56$~W in the following. The optimal currents, the corresponding voltages and the consumed powers of all TXs are given in Table~\ref{table_solution} for these two cases. For $P=1$~W, it is observed that most of the transmit power is consumed by TX $2$ and TX $5$ which have the two largest mutual inductance values with the RX.
This implies that the TX with larger mutual inductance with the RX carries higher current, and thus consumes more power so as to maximize the efficiency of WPT.
In this case, all TX current or voltage constraints are inactive, and it can be further verified that the current of each TX is exactly proportional to its mutual inductance with the RX. This is in accordance with Theorem \ref{the:solution_specialP2_Q1}.
In contrast, to support higher RX load power of $56$~W, the voltages of all TXs reach the peak value $50 \sqrt{2}$ V. This results in a decreased efficiency, due to relatively smaller mutual inductance between TXs $1$, $3$, $4$ and the RX, compared to those between TXs $2$, $5$ and the RX.

\begin{figure} [t!]
	\centering
	\includegraphics[width=.95\columnwidth]{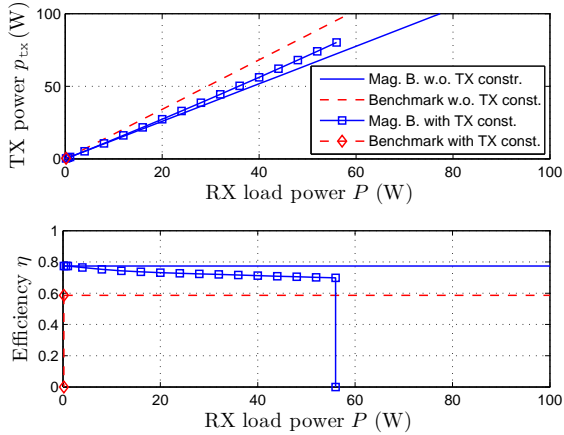}
	\caption{TX sum-power and efficiency versus RX load power.}
	\label{fig:Fig_A_fixedR}
\end{figure}

\subsection{MIMO WPT with Multiple RXs}
In this subsection, we consider the multi-user case, i.e., there are \textcolor{black}{more than one} RXs.
\subsubsection{Two-user Case}
For the two-user case, we consider in Fig. \ref{fig:Fig_app} only RXs 1 and 2 are present. Fig.~\ref{fig:Fig_region} plots the power regions for the proposed magnetic beamforming and the benchmark \textcolor{black}{scheme with uncoordinated WPT}, respectively. Each power region is shown as a convex set, as expected. There is a \textcolor{black}{trade-off} between the maximally delivered powers $p_{\textrm{rx}, \; 1}$ and $p_{\textrm{rx}, \; 2}$ to RX $1$ and RX $2$, respectively, i.e., $p_{\textrm{rx}, \; 2}$ decreases as $p_{\textrm{rx}, \; 1}$ increases. Without TX peak voltage/curent constraints, we observe that for the magnetic beamforming system, the maximally delivered power is $87.5$~W and $77.5$~W, for RX $1$ and RX $2$, respectively; while for the benchmark system, the maximally delivered power for them are $50.4$~W and $27.5$~W, respectively.

With TX peak voltage/current constraints, the maximally delivered power is $46$~W and $57.5$~W, for RX $1$ and RX $2$, respectively, for the magnetic beamforming system. This is because the inductance (i.e., $0.9468$ \textcolor{black}{$\mu$H}) between TX $1$ and RX $1$ is much larger than that between \textcolor{black}{any} other TX and RX $1$, and the maximally delivered power to RX $1$ is limited by the peak voltage constraint for TX $1$. In contrast, for the benchmark system, the maximally delivered power for them are $0.38$~W and $0.22$~W, for RX $1$ and RX $2$, respectively, which are negligible compared to those for the magnetic beamforming system.
The significant improvement over the benchmark system is also shown for both the cases with or without the TX peak constraints.

\begin{figure} [t!]
	\centering	\includegraphics[width=.95\columnwidth]{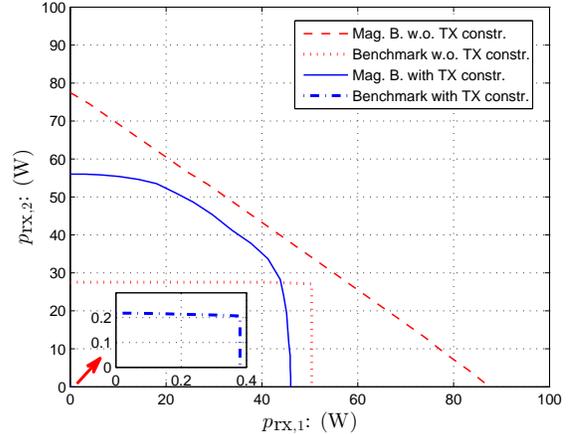}
	\caption{Power region for the two-user case with RX $1$ and RX $2$.}
	\label{fig:Fig_region} \vspace{-3mm}
\end{figure}

\subsubsection{Four-user Case}
For the four-user case, we consider RXs 1, 2, 3 and 4 are present in Fig.~\ref{fig:Fig_app}.
We fix the power profile vector $\bm{\alpha}=[\alpha_1 \ \alpha_2 \ \alpha_3 \ \alpha_4]^T =[0.1227 \ 0.03615 \ 0.7836 \ 0.05752]^T$, under which numerical results show that the optimal SDR solution is of rank two.
Fig.~\ref{fig:Fig_opt_subopt} plots the total TX power consumed versus the total RX power delivered for the TS-based solution, the randomization-based solution, and the benchmark scheme. We observe that the TS solution achieves the best performance, while the randomization solution performs slightly worse. For the benchmark scheme, the maximally delivered power to all RXs is \textcolor{black}{only} $0.8$~W, and the consumed total TX power increases faster than the proposed \textcolor{black}{schemes}.



\section{Conclusions}\label{sec:conslusion}
This paper has studied the optimal magnetic beamforming design subject to practical power and circuit constraints for the multi-user MIMO MRC-WPT system.
\textcolor{black}{To characterize the optimal performance trade-offs among the users on the boundary of the multi-user power region, we formulate an optimization problem to maximize the sum-power deliverable to all \textcolor{black}{RXs} subject to the constraints on the minimum load power at each \textcolor{black}{RX}, which is proportionally set based on a given power-profile vector, as well as the practical maximum peak voltage and current at each TX.
We propose an iterative algorithm to solve the formulated problem, which requires to solve a TX sum-power minimization problem at each iteration.}
For the special case of one single RX and without TX peak current/voltage constraints, the optimal current of each TX is shown to be proportional to the mutual inductance between its TX coil and the RX coil. Besides, for the case of multiple RXs and without TX peak current/voltage constraints, we propose a new TS-based scheme that achieves the optimal solution. In general, the TX sum-power minimization problem is a non-convex quadratically constrained quadratic programming (QCQP) and thus difficult to solve optimally. However, for the case of one single RX, we show the existence of optimal rank-one solution to the SDR of the formulated QCQP, and thus solve the problem optimally.
For the general case with multiple RXs, we derive a new upper bound on the rank of the optimal SDR solution.
Based on the obtained SDR solution with higher rank, two approximate solutions are proposed by applying the techniques of TS and randomization,  respectively.
Furthermore, an efficient method to estimate the magnetic MIMO channel is proposed for the practical implementation of magnetic beamforming. Numerical results show the effectiveness of the proposed magnetic channel estimation and beamforming schemes as well as their great potential to significantly enhance the energy efficiency, maximum deliverable power, as well as performance fairness in multi-user MIMO MRC-WPT systems over the benchmark uncoordinated equal-current transmission.
\textcolor{black}{As a concluding remark, we would like to point out that in this paper, we have assumed that the RXs are well separated from each other, and thus the mutual inductances between them are negligible and thus ignored.  However, if the RX coupling is considered, our proposed circuit analysis and magnetic beamforming design need to be modified accordingly, which is worthy of investigation in future work.}

\begin{figure}[t!]
	\centering
	\includegraphics[width=.95\columnwidth]{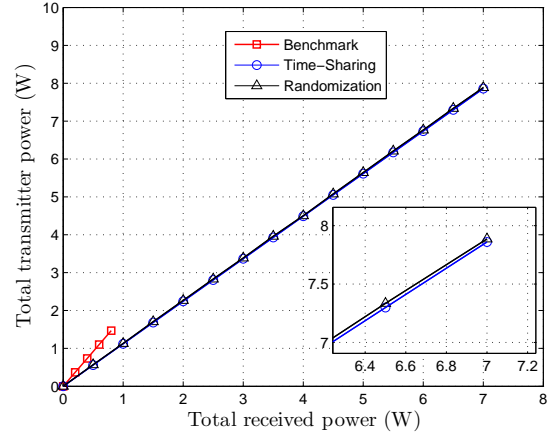}
	\caption{Total TX power v.s. total RX power for the four-user case.}
	\label{fig:Fig_opt_subopt}
	\vspace{-6mm}
\end{figure}

\appendices
\section{Proof to Theorem~\ref{the:solution_specialP2_Q1}} \label{sec:proof_lemma2}
For $Q=1$, we construct the Lagrangian of $\mathrm{(P2)}$ as
\begin{align}
  L(\bar{\bi}, v) &= \frac{1}{2} {\bar{\bi}}^H  {\overline{\bB}} {\bar{\bi}} + v \left( \alpha_1 P - \frac{w^2 }{2 r_{\textrm{rx}, 1}} \bar{\bi}^H \bM_1 \bar{\bi}\right). \label{eq:Lagrangian}
\end{align}
Then, the (Lagrange) dual function is given by
\begin{align}
L(v)&=  \alpha_1 P v \!+\! \underset{{\bar{\bi}}}{\inf} \;\frac{1}{2} {\bar{\bi}}^H  \! \left( \bR \!+\! \frac{w^2 (1 \!-\! v )}{r_{\textrm{rx}, 1}} \bM_1 \right) \bar{\bi}. \label{eq:LagrangeDual_1}
\end{align}
To obtain the best lower bound on the optimal objective value of ($\mathrm{P2}$), the dual variable $v$ should be optimized over $v \geq 0$ to maximize the dual function given in~\eqref{eq:LagrangeDual_1}. For dual feasibility, the dual function~\eqref{eq:LagrangeDual_1} should be bounded below. For convenience, we consider the following eigenvalue decomposition (EVD):
\begin{align}
 \bR + \frac{w^2 (1-v )}{r_{\textrm{rx}, 1}} \bM_1 = \bU \bPsi \bU^H, \label{eq:SVD_specialP3}
\end{align}
where the matrix $\bU = [\bu_1 \; \bu_2 \; \ldots \; \bu_N]$ is orthogonal, and $\bPsi = \diag\{\psi_1, \ldots, \psi_N\}$, with $\psi_1 \leq \psi_2 \leq \ldots \leq \psi_N$. For the case of arbitrary transmitter resistance values, the Lagrangian in \eqref{eq:Lagrangian} is bounded below in $\bar{\bi}$ and the dual function \eqref{eq:LagrangeDual_1} is maximized, only when $v$ is chosen as $v^{\star} >0$ such that $\psi_1 =0$.
Moreover, we observe that the objective in \eqref{eq:rewardP2} is minimized when the constraint \eqref{eq:const1P2} holds with equality, since both $\overline{\bB}$ and the matrix $\boldm \boldm^H$ are PSD. Hence, the optimal current can be written as ${\bar{\bi}}^{\star} = \beta \bu_1$, where $\bu_1$ is the eigenvector associated with the eigenvalue $\psi_1 =0$, and $\beta$ is a constant such that the constraint \eqref{eq:const1P2} holds with equality. Since the dual optimal solution leads to a primal feasible solution and the problem satisfies the Slater's condition~\cite{CVXTool}, the duality gap for $\textrm{(P2)}$ in the case of $Q=1$ is zero (although the problem is non-convex due to its non-convex constraints.)

For the special case of identical transmitter resistance, i.e., $\bR = r \bI_N$,
from the isometric property of the identity matrix $\bI_N$, the diagonal matrix $\bLambda$ is given by
\begin{align}
\bLambda &= \diag \left\{ r + \frac{w^2 (1-v )}{r_{\textrm{rx}, 1}}, \; r,\; \ldots, \; r\right\},
\nonumber
\end{align}
and the eigenvector $\bu_1 = \frac{\boldm_1}{\| \boldm_1 \|_2}$, and $\bu_n, \; \forall n \geq 2$, are arbitrarily orthogonal vectors constructed by methods such as Gram$-$Schmidt method. {It is easy to show that the Lagrangian in \eqref{eq:Lagrangian} is bounded below in $\bar{\bi}$ and the dual function \eqref{eq:LagrangeDual_1} is maximized, only when $v$ is chosen such that the first eigenvalue is zero}, i.e., the optimal dual variable is
\begin{align}
v^{\star} = 1 + \frac{r r_{\textrm{rx}, 1} }{w^2},
\end{align}
and the optimal current is thus given in~\eqref{eq:solution_wo_constraint_common_r}. The proof of Theorem~\ref{the:solution_specialP2_Q1} is thus completed.
\section{Proof to Theorem~\ref{the:rank-bound}} \label{sec:proof_theorem1}
Let $\bm{\lambda} = [\lambda_1 \; \ldots \; \lambda_Q]^T \geq 0, \bm{\rho}=[\rho_1 \; \ldots \; \rho_N]^T \geq 0$, and $\bm{\mu} =[\mu_1 \; \ldots \; \mu_N]^T$ $\geq 0$ be the dual variables corresponding to the constraint(s) given in~\eqref{eq:const1P1SDP},~\eqref{eq:const2P1SDP}, and~\eqref{eq:const3P1SDP}, respectively. Let the matrix $\bS  \succcurlyeq 0$ be the dual variable corresponding to the constraint $\bX \succcurlyeq 0$ in~\eqref{eq:const4P1SDP}. The Lagrangian of $(\mathrm{P1}\mathrm{-SDR})$ is then written as
\begin{align}
  L(\bX, \lambda, \bm{\rho}, \bS)  &= \frac{1}{2}   \Tr \left( {\overline{\bB}}  \bX \right) -  \\
  &\quad \sum \limits_{q=1}^Q \lambda_q \left( \Tr \left( {\bM_q} \bX \right) - \frac{2 r_q^2 \alpha_q P}{w^2 r_{ \textrm{l}, q} }\right)  + \nonumber \\
  & \quad  \sum \limits_{n=1}^N  \rho_n  \left( \Tr \left( {\bB_n} \bX \right) -A_{ n}^2 \right) + \nonumber \\
  &\quad \sum \limits_{n=1}^N  \mu_n   \left( \Tr \left( {\bW_n} \bX \right) - D_{ n}^2 \right) - \Tr \left( \bS \bX \right).
  \nonumber
\end{align} \vspace{-2mm}

Let $\bX^{\star}, \bm{\lambda}^{\star}, \bm{\rho}^{\star}, {\bm{\mu}^{\star}}$, and $\bS^{\star}$ be the optimal primal and dual variables, respectively. Since $\textrm{(P1-SDR)}$ is convex and satisfies the Slater's condition, the strong duality holds for this problem \cite{ConvecOptBoyd04}; as a result, the optimal primal and dual solutions should satisfy the Karush$-$Kuhn$-$Tucker (KKT) conditions given by \vspace{-2mm}
\begin{align}
  \nabla_{\bX}  L(\bX^{\star}, \bm{\lambda}^{\star}, \bm{\rho}^{\star}, \bm{\mu}^{\star}, \bS^{\star}) &= \frac{1}{2} {\overline{\bB}} - \sum \limits_{q=1}^Q \lambda_q^{\star} \bM_q + \nonumber \\
  &\quad \sum \limits_{n=1}^N \rho_n^{\star} {\bB_n} + \sum \limits_{n=1}^N \mu_n^{\star} {\bW_n}   - \bS^{\star} \nonumber \\
  &= \bm{0}. \label{eq:KKT_orthgonality} \\
  \bS^{\star} \bX^{\star} &= \bm{0}. \label{eq:KKT_complementarity}
\end{align} \vspace{-5mm}

Next, by multiplying \eqref{eq:KKT_orthgonality} by $\bX^{\star}$ on both sides and substituting \eqref{eq:KKT_complementarity} into the obtained equation, we have
\begin{align}
  \frac{1}{2} {\overline{\bB}} \bX^{\star} \!-\! \sum \limits_{q=1}^Q \lambda_q^{\star} \bM_q \bX^{\star} \!+\!  \sum \limits_{n=1}^N \rho_n^{\star} {\bB_n} \bX^{\star} \!+\!  \sum \limits_{n=1}^N \mu_n^{\star} {\bW_n} \bX^{\star}= \bm{0}. \label{eq:KKT_orthgonality2}
\end{align}
We thus have
\begin{align}
\rank &\left(\! \left( \! \frac{1}{2} {\overline{\bB}}+  \sum \limits_{n=1}^N \rho_n^{\star} {\bB_n}  +  \sum \limits_{n=1}^N \mu_n^{\star} {\bW_n} \! \right) \bX^{\star} \! \right) \nonumber \\
& =\rank \left( \! \sum \limits_{q=1}^Q \bM_q \bX^{\star} \! \right) \leq \rank \left( \! \sum \limits_{q=1}^Q  \bM_q \! \right) \leq Q. \label{eq:KKT_orthgonality3}
\end{align}
Since $\overline{\bB}$ is PSD, the matrix $\left( \frac{1}{2} {\overline{\bB}} +  \sum \limits_{n=1}^N \rho_n^{\star} {\bB_n} +  \sum \limits_{n=1}^N \mu_n^{\star} {\bW_n}  \right)$ must have full rank. Hence, \eqref{eq:KKT_orthgonality3} implies
\begin{align}
    \rank &\left( \bX^{\star} \right) \label{eq:KKT_orthgonality4}  \\
    &=  \rank   \left(\!  \left( \! \frac{1}{2} {\overline{\bB}} +  \sum \limits_{n=1}^N \rho_n^{\star} {\bB_n} +  \sum \limits_{n=1}^N \mu_n^{\star} {\bW_n} \! \right) \bX^{\star}  \! \right) \leq Q. \nonumber
\end{align}
On the other hand, from~\cite[Thm. 3.2]{HuangPalomar10}, we have
\begin{align}
  \rank  \left( \bX^{\star} \right) \leq \sqrt{Q+2N}. \label{eq:boundThm3.2}
\end{align}
Hence, from~\eqref{eq:KKT_orthgonality4} and~\eqref{eq:boundThm3.2}, the rank of the optimal solution $\bX^{\star}$ is upper-bounded as in~\eqref{eq:rank_bound}. The proof of Theorem 2 is thus completed.
\section{Proof of Theorem~\ref{the:LS}}\label{sec:proof_MCE}
With an estimate $\hatbM$, the squared error is given by
\begin{align}\label{eq:square_error}
  J(\hatbM) = \Tr \left( \left( \bG - \hatbM \tilbZ \right) \left( \bG - \hatbM \tilbZ \right)^H\right).
\end{align}
The LS estimate of $\bM$ is obtained by solving the following squared-error minimization problem,
\begin{align}\label{eq:LS_general}
  \hatbM = \underset{\hatbM}{\arg \min} \ \Tr \left( \left( \bG - \hatbM \tilbZ \right) \left( \bG - \hatbM \tilbZ \right)^H\right).
\end{align}
The derivative of the squared error $J(\hatbM)$ \textcolor{black}{with respect to $\hatbM$} is derived as
\begin{align}\label{eq:derivative}
  \frac{\partial J(\hatbM)}{\partial \hatbM}
  &= - \bG \tilbZ^H - \bG^{\ast} \tilbZ^T + \hatbM \left( \tilbZ \tilbZ^H + \tilbZ^{\ast} \tilbZ^T \right).
\end{align}

Since the $Q \times T$ RX-current matrix $\bZ$ has a full rank of $Q$ with probability one and the error matrix $\bE$ is random, the RX current matrix $\tilbZ= \bZ + \bE$ known at the central controller should also have a full rank of $Q$ with probability one and thus its inverse exists.
By setting the derivative in~\eqref{eq:derivative} to zero, the LS estimate is obtained as in~\eqref{eq:LS_est}. From~\eqref{eq:square_error} and~\eqref{eq:LS_est}, the corresponding least squared error is obtained as
\begin{align}
  \hatbM_{\textrm{LS}} = \left( \bG \tilbZ^H  + \bG^{\ast} \tilbZ^T \right) \left( \tilbZ \tilbZ^H  + \tilbZ^{\ast} \tilbZ^T \right)^{-1}.
\end{align}
The proof of Theorem~\ref{the:LS} is thus completed.
\bibliography{IEEEabrv,reference_170130}
\bibliographystyle{IEEEtran}

\end{spacing}
\end{document}